\theoremstyle{plain}
\newtheorem{theorem}{Theorem}[section]
\newtheorem{corollary}[theorem]{Corollary}
\newtheorem{lemma}[theorem]{Lemma}
\theoremstyle{definition}
\newtheorem{definition}[theorem]{Definition}
\DeclarePairedDelimiter{\abs}{|}{|}
\DeclarePairedDelimiter{\set}{\{}{\}}
\newcommand*{\RR}{\mathbb{R}}
\newcommand*{\Cinfty}{\mathscr{C}^\infty}
\renewcommand{\L}{\mathcal{L}}
\DeclareMathOperator{\Id}{Id}
\newcommand{\dd}{\mathrm{d}}
\NewDocumentCommand{\pdv}{mm}{\frac{\partial #1}{\partial #2}}
\NewDocumentCommand{\orth}{om}{{#2}^{\bot\IfValueT{#1}{_#1}}}
\NewDocumentCommand{\orthL}{om}{\prescript{\bot\IfValueT{#1}{_#1}}{}{#2}}
\newcommand*{\lieD}[1]{\mathscr{L}_{#1}}
\DeclareMathOperator{\pr}{pr}
\newcommand{\T}{\mathrm{T}}
\title{The Herglotz variational principle for dissipative field theories}
\author{\sffamily 
Jordi Gaset\thanks{{\bf e}-{\it mail}:
   jordi.gaset@unir.net (ORCID: 0000-0001-8796-3149).} ,
Manuel Lainz\thanks{{\bf e}-{\it mail}:
   manuel.lainz@icmat.es. (ORCID: 0000-0002-2368-5853).} ,
Arnau Mas\thanks{{\bf e}-{\it mail}:
   arnau.mas@physik.uni-muenchen.de (ORCID: 0000-0003-0532-0938).} ,
Xavier Rivas\thanks{{\bf e}-{\it mail}:
   xavier.rivas@unir.net. (ORCID: 0000-0002-4175-5157).} .
}
    \date{\today}
\begin{document}
\maketitle

\begin{abstract}

In the recent years, with the incorporation of contact geometry, there has been a renewed interest in the study of dissipative or non-conservative systems in physics and other areas of applied mathematics. The equations arising when studying contact Hamiltonian systems can also be obtained via the Herglotz variational principle. The contact Lagrangian and Hamiltonian formalisms for mechanical systems has also been generalized to field theories. The main goal of this paper is to develop a generalization of the Herglotz variational principle for first-order and higher-order field theories. In order to illustrate this, we study three examples: the damped vibrating string, the Korteweg--De Vries equation, and an academic example showing that the non-holonomic and the vakonomic variational principles are not fully equivalent.

\end{abstract}

\medskip

\noindent\textbf{Keywords:}
Herglotz variational principle, higher-order field theories, contact field theory, Korteweg--De Vries equation

\noindent\textbf{MSC\,2020 codes:}
37K58, 37L05, 53D10, 35Q53

{
\def\baselinestretch{1}
\small
\def\addvspace#1{\vskip 1pt}
\parskip 0pt plus 0.1mm
\tableofcontents
}

\newpage

\section{Introduction}
It is well known that symplectic geometry is the natural geometric framework to study Hamiltonian mechanical systems \cite{Abr1978,Arnold1997,God1969,Lib1987}. When dealing with time-dependent mechanical systems, cosymplectic geometry is the appropriate framework to work with \cite{Car1993,Chi1994,Ech1991}. These two geometric structures have been generalized to the so-called $k$-symplectic and $k$-cosymplectic structures in order to deal with autonomous and non-autonomous field theories \cite{Awa1992,DeLeo2002,DeLeo1998,DeLeo2001,Mun2010,Rey2005,Rom2007,Rom2011}.

In recent years, the interest in dissipative systems has grown significantly. In part, this is due to the incorporation of contact geometry \cite{Ban2016,Gei2008,Kho2013} to the study of non-conservative Lagrangian and Hamiltonian mechanical systems \cite{Bra2017a,Bra2017b,DeLeo2019b,DeLeo2020b,Gas2019}. This approach has proved to be very useful in many different problems in areas such as thermodynamics, quantum mechanics, general relativity, control theory among others \cite{Bra2018,Cia2018,dLeo2021b,dLeo2017,gaset_EH_2022,Got2016,Kho2013,paiva_generalized_2022,Ram2017,Sim2020,Sus1999}. Recently, the notion of cocontact manifold has been developed in order to introduce explicit dependence on time \cite{DeLeo2022,RiTo-2022}.

This growing interest has driven researchers to look for a generalization of $k$-symplectic and contact geometry in order to work with non-conservative field theories. This new geometric framework is called $k$-contact geometry, and has already been applied to the study of both Hamiltonian and Lagrangian field theories in the autonomous \cite{Gas2020,Gaset2021,PhDThesisXRG} and non-autonomous \cite{Riv2022} cases. The contact formulation of mechanics has also been generalized to describe higher-order mechanical systems in \cite{DeLeo2021}. The Skinner--Rusk formalism has also been studied in detail for both contact \cite{DeLeo2020} and $k$-contact systems \cite{Gra2021}. Recently, the notion of multicontact structure has been introduced \cite{de_leon_multicontact_2022}, generalizing the multisymplectic framework to deal with non-conservative field theories. The Herglotz principle \cite{dLeo2021,Gue1996,Her1930,ryan_when_2022} provides a variational formulation for contact Hamiltonian systems. There have been several attempts \cite{Georgieva2003a,Lazo2018} to generalize this theory to field theories.

In this paper we will derive this principle in a more general geometric language and compare it to the existing approaches. In order to do that, we will review three different formulations of the Herglotz principle for mechanics, the implicit version, the vakonomic version, and the non-holonomic version. In order to find a Herglotz principle for higher dimensions, we will generalize the vakonomic and the non-holonomic versions of the Herglotz principle for mechanical systems. We will see that the non-holonomic approach yields the same field equations as in the $k$-contact \cite{Gaset2021,Riv2022} and multicontact \cite{de_leon_multicontact_2022} formalisms. On the other hand, in contrast to what happens in mechanics, using the vakonomic approach we obtain an additional condition that must be fulfilled. This new equation implies that the $k$-contact and multicontact  Lagrangian formalisms are not fully equivalent to the vakonomic variational principle introduced in the present paper. One of the examples of the last section will illustrate this fact.

The vakonomic Herglotz variational principle for first-order field theories is then extended to a suitable Herglotz principle for higher-order non-conservative field theories. As an example, the Korteweg--De Vries equation \cite{Kor1895} is discussed. This equation arises from a second-order Lagrangian and is used to model waves in shallow waters. In order to have a dissipative behaviour, we add a standard damping term to the Korteweg--De Vries Lagrangian and use the variational principle to derive a non-conservative version of the Korteweg--De Vries equation.

The organization of the paper is as follows.
In first place, Section \ref{sec:2} offers a review of the Herglotz principle in mechanics. In particular, we see three different approaches: the {\it implicit version}, the {\it vakonomic version} and the {\it non-holonomic version}. Section \ref{sec:3} is devoted to extend the Herglotz variational principle from mechanics to field theory using the vakonomic and the non-holonomic approaches. In Section \ref{sec:4} we generalize the results given in Section \ref{sec:3} to the case of higher-order Lagrangian densities using the vakonomic variational principle.

Finally, Section \ref{sec:5} is devoted to study some examples of the theoretical framework developed above. The first example deals with a first-order system consisting of a damped vibrating string with friction linear to the velocity. The second example shows that, as said before, the vakonomic variational principle for field theories and the $k$-contact formulations are not equivalent. We present an academic example consisting in taking the Lagrangian of the previous example and slightly modifying the damping term. In this case, we find a solution to the $k$-contact Euler--Lagrange equations that does not satisfy the additional condition arising from the vakonomic principle. The last example deals with the Korteweg--De Vries equation, which arises from a second-order Lagrangian.

Throughout this paper, all the manifolds are assumed to be real, connected and second countable. Manifolds and mappings are assumed to be smooth. The sum over crossed repeated indices is understood.

\section{The Herglotz principle in mechanics}\label{sec:2}

The Herglotz principle, in simple terms, might be explained as follows. Given a configuration manifold $Q$, consider a Lagrangian function $L:\T Q \times \mathbb{R} \to \mathbb{R}$ depending on the positions $q^i$, the velocities $\dot{q}^i$ and an extra variable $z$ that we can think of as the \emph{action}, but we will soon discuss its meaning in more detail. The Herglotz variational principle states that the trajectory of the system $c(t)$ is a critical point of the action $\zeta(1)$, satisfying $c(0) = q_0$, $c(1) = q_1$, $\zeta(0) = z_0$ and
\begin{equation*}
    \begin{dcases}
        \frac{\dd\zeta}{\dd t} = L(c, \dot{c},\zeta)\,,\\
        \zeta(0) = z_0\,.
    \end{dcases}
\end{equation*}
We note that the action is given by
\begin{equation}
  \zeta(1) = \int_0^1 \frac{\dd \zeta}{\dd t} \dd t + \zeta(0)  = 
  \int_0^1 L(c(t), \dot{c}(t), \zeta(t)) \dd t + z_0\,,
\end{equation}
which, if the Lagrangian does not depend on $z$, coincides with the usual Hamilton's action up to a constant.

A slight modification of this principle, that is the one we will prefer in this paper, is to consider the action as the increment of $z$, that is, 
\begin{equation}
  \zeta(1) - \zeta(0) = \int_0^1 \frac{\dd \zeta}{\dd t} \dd t  = 
  \int_0^1 L(c(t), \dot{c}(t), \zeta(t)) \dd t\,,
\end{equation}
which coincides exactly with Hamilton's action if the Lagrangian $L$ does not depend on $z$. Since both definitions of the action differ only by a constant $z_0$, their critical curves are the same. Indeed, they are the curves $c$ such that $(c, \dot{c}, \zeta)$ satisfy Herglotz's equations:
\begin{equation}
  \frac{\partial L}{\partial q^i} - \frac{\dd }{\dd t} \frac{\partial L}{\partial \dot{q}^i} = \frac{\partial L}{\partial \dot{q}^i} \frac{\partial L}{\partial z}\,.
\end{equation}

To be more precise, we distinguish two possible equivalent interpretations of this principle. We can either understand it as an implicit action principle for curves $c$ on $Q$, or as a constrained but explicit action principle for curves $(c, \zeta)$ on $Q \times \mathbb{R}$. The three different ways to formalize the Herglotz principle that we will see in this section are based on \cite{dLeo2021}. Another version can be found in \cite{de_leon_optimal_2023}.

\subsection{Herglotz principle: implicit version}\label{sect:MecImp}

For the first interpretation, we consider the (infinite dimensional) manifold  $\Omega(q_0, q_1)$  of curves $c:[0,1]\to Q$ with endpoints $q_0,q_1 \in Q$. The tangent space  of $\T_c \Omega(q_0,q_1)$, is the space of vector fields along $c$ vanishing at the endpoints. That is,
\begin{align*}
        \T_c \Omega(q_0,q_1) &=  \{\delta c \mid \delta c(t) \in  \T_{c(t)} Q\,,\ \delta c(0)=0\,,\ \delta c(1)=0 \}\,.
\end{align*}
Let $z_0 \in \mathbb{R}$ and consider the operator
\begin{equation}\label{eq:Z_operator}
    \mathcal{Z}_{z_0}:c\in\Omega(q_0,q_1) \longmapsto \mathcal{Z}_{z_0}(c)\in\Cinfty ([0,1] \to \mathbb{R})\,,
\end{equation}
where $\mathcal{Z}_{z_0}(c)$ is the only solution to the Cauchy problem
\begin{equation}\label{contact_var_ode}
    \begin{dcases}
        \frac{\dd\mathcal{Z}_{z_0}(c)}{\dd t} = L(c, \dot{c}, \mathcal{Z}_{z_0}(c))\,,\\
        \mathcal{Z}_{z_0}(c)(0) = z_0\,,
    \end{dcases}
\end{equation}
that is, it assigns to each curve on the base space its action as a function of time. This map is well-defined because the Cauchy problem~\eqref{contact_var_ode} always has a unique solution.

Now, the \emph{contact action functional} maps each curve $c\in \Omega(q_0, q_1)$ to the increment of the solution of the Cauchy problem \eqref{contact_var_ode}:
\begin{equation}\label{eq:contact_action}
    \begin{aligned}
        \mathcal{A}_{z_0}: \Omega(q_0,q_1) &\longrightarrow \mathbb{R}\\
        c &\longmapsto \mathcal{Z}_{z_0}(c)(1) - \mathcal{Z}_{z_0}(c)(0)\,.
    \end{aligned}
\end{equation}
Note that, by the fundamental theorem of calculus,
\begin{equation*}
    \mathcal{A}_{z_0}(c) = \int_0^1 L(c(t),\dot{c}(t), \mathcal{Z}_{z_0}(c)(t)) \dd t\,.
\end{equation*}
The following theorem states that the critical points of this action functional are precisely the solutions to Herglotz equation~\cite{deLeon2019}.
\begin{theorem}[Herglotz variational principle, implicit version]\label{thm:Herglotz_principle}
    Let $L: \T Q  \times \mathbb{R} \to \mathbb{R}$ be a Lagrangian function and consider $c\in \Omega(q_0, q_1)$ and $z_0 \in \mathbb{R}$. Then, $(c,\dot{c}, \mathcal{Z}_{z_0}(c))$ satisfies the Herglotz equations
    \begin{equation*}  
        \frac{\dd}{\dd t} \frac{\partial L}{\partial \dot{q}^i} - \frac{\partial L}{\partial q^i} = \frac{\partial L}{\partial \dot{q}^i} \frac{\partial L}{\partial z}\,,
    \end{equation*}
    if and only if $c$ is a critical point of the contact action functional $\mathcal{A}_{z_0}$.
\end{theorem}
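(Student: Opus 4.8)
The plan is to compute the first variation of $\mathcal{A}_{z_0}$ directly and then identify its critical points via the fundamental lemma of the calculus of variations. The central difficulty, compared with the classical Hamilton principle, is that the integrand depends on $\mathcal{Z}_{z_0}(c)$, which is not a freely prescribable function but the solution of the Cauchy problem \eqref{contact_var_ode}; hence the variation of $\mathcal{Z}_{z_0}(c)$ is itself governed by a differential equation and must be computed before the usual integration-by-parts argument can proceed.

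Concretely, I would fix a variation $\delta c \in \T_c\Omega(q_0,q_1)$, choose a one-parameter family $c_\epsilon$ with $c_0 = c$ and $\partial_\epsilon c_\epsilon|_{\epsilon = 0} = \delta c$, and set $z_\epsilon := \mathcal{Z}_{z_0}(c_\epsilon)$. Smooth dependence of solutions of ODEs on parameters guarantees that $\epsilon \mapsto z_\epsilon$ is differentiable, so that $\delta z := \partial_\epsilon z_\epsilon|_{\epsilon=0}$ is well defined. Differentiating the defining equation $\dot{z}_\epsilon = L(c_\epsilon,\dot c_\epsilon,z_\epsilon)$ in $\epsilon$ at $\epsilon=0$ yields the linear, inhomogeneous first-order equation
\begin{equation*}
  \frac{\dd}{\dd t}\delta z - \frac{\partial L}{\partial z}\,\delta z = \frac{\partial L}{\partial q^i}\,\delta c^i + \frac{\partial L}{\partial \dot{q}^i}\,\frac{\dd}{\dd t}\delta c^i\,,
\end{equation*}
with initial condition $\delta z(0)=0$, since $z_\epsilon(0)=z_0$ for all $\epsilon$. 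Because $\mathcal{A}_{z_0}(c_\epsilon)=z_\epsilon(1)-z_0$, the first variation is simply $\delta\mathcal{A}_{z_0}(c)=\delta z(1)$, so the whole problem reduces to evaluating the solution of this linear ODE at $t=1$.

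To do so I would integrate the equation against the integrating factor $\sigma(t):=\exp\!\bigl(-\int_0^t \frac{\partial L}{\partial z}\,\dd s\bigr)$, obtaining $\sigma(1)\,\delta z(1)=\int_0^1 \sigma\bigl(\frac{\partial L}{\partial q^i}\delta c^i + \frac{\partial L}{\partial \dot q^i}\frac{\dd}{\dd t}\delta c^i\bigr)\,\dd t$. Integrating the second term by parts and using $\delta c(0)=\delta c(1)=0$ to discard the boundary contributions, then using $\dot\sigma=-\frac{\partial L}{\partial z}\sigma$ to recombine terms, I expect to arrive at
\begin{equation*}
  \sigma(1)\,\delta\mathcal{A}_{z_0}(c) = -\int_0^1 \sigma(t)\left(\frac{\dd}{\dd t}\frac{\partial L}{\partial \dot q^i} - \frac{\partial L}{\partial q^i} - \frac{\partial L}{\partial \dot q^i}\frac{\partial L}{\partial z}\right)\delta c^i\,\dd t\,.
\end{equation*}

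Finally, $c$ is critical exactly when this integral vanishes for every admissible $\delta c^i$. Since $\sigma(t)>0$ for all $t$, the weight $\sigma$ does not affect the conclusion: by the fundamental lemma of the calculus of variations the bracketed expression must vanish identically, which is precisely the Herglotz equation, and the converse is immediate. The step I expect to require the most care is the justification that $\delta z$ is well defined and satisfies the linearized ODE — that is, the differentiability of the operator $\mathcal{Z}_{z_0}$ with respect to $c$ — together with the observation that the strictly positive factor $\sigma$ may be dropped when invoking the fundamental lemma.
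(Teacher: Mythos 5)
Your proposal is correct and follows essentially the same route as the paper's own proof: differentiate the defining Cauchy problem to obtain the linear ODE for the variation of $\mathcal{Z}_{z_0}(c)$, solve it with the integrating factor $\sigma(t)=\exp\bigl(-\int_0^t \frac{\partial L}{\partial z}\,\dd\tau\bigr)$, integrate by parts using the vanishing of $\delta c$ at the endpoints, and conclude with the fundamental lemma, noting that the strictly positive weight $\sigma$ can be dropped. The only difference is that you explicitly justify the differentiability of $\mathcal{Z}_{z_0}$ via smooth dependence of ODE solutions on parameters, a point the paper leaves implicit (and your final signs are in fact cleaner than the paper's, which contains a harmless sign typo in its expanded integrand).
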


\begin{proof}
    In order to simplify the notation, let  $\psi = \T_c \mathcal{Z}(\delta v)$. Consider a curve $c_\lambda \in \Omega(q_0,q_1)$, namely a family of curves in $Q$ with fixed endpoints $q_0,q_1$ smoothly parametrized by $\lambda\in\RR$, such that
        \begin{equation*}
            \delta c={\frac{\dd c_\lambda}{\dd \lambda}}\Big{\vert}_{\lambda=0}\,.
        \end{equation*} 
    Since $\mathcal{Z}(c_\lambda)(0)=z_0$ for all $\lambda$, then $\psi(0)=0$.
        
    We compute the derivative of $\psi$ by interchanging the order of the derivatives using the differential equation defining $\mathcal{Z}$:
        \begin{align*}
            \dot{\psi}(t) &= 
            {\frac{\dd }{\dd \lambda}\Big\vert_{\lambda = 0} \frac{\dd }{\dd t} 
            \mathcal{Z}(c_\lambda(t))} =
            \frac{\dd }{\dd \lambda}\Big\vert_{\lambda = 0}{L(c_\lambda(t), \dot{c}_\lambda(t),\mathcal{Z}(c_\lambda)(t))} \\&=
            \frac{\partial L}{\partial q^i}(\chi(t)) {\delta c} ^i(t) +
            \frac{\partial L}{\partial \dot{q}^i}(\chi(t)) {\delta \dot{c}}^i(t) + 
            \frac{\partial L}{\partial z}(\chi(t)) \psi(t)\,.
        \end{align*}
    
    Hence, the function $\psi$ is the solution to the ODE above. Since $\psi(0)=0$, necessarily,
    \begin{equation}
         \psi(t) = \frac{1}{\sigma(t)} \int_0^t \sigma(\tau) \left({
            \frac{\partial L}{\partial q^i}(\chi(\tau)) {\delta c} ^i(\tau) + \frac{\partial L}{\partial \dot{q}^i}(\chi(\tau)) {\delta \dot{c}}^i(\tau)
            } \right) \dd \tau\,,
    \end{equation}
    where
    \begin{equation}
        \sigma(t) = \exp \left({-\int_0^t \frac{\partial L}{\partial z}(\chi(\tau)) \dd \tau}\right) > 0\,.
    \end{equation}
    Integrating by parts and using  and that the variation vanishes at the endpoints, we get the following expression:
    \begin{align*}
            \T_c\mathcal{A}(\delta c) &= \T_c\mathcal{Z}(\delta c)(1) = \psi(1) =  
             \frac{1}{\sigma(1)} \int_0^1  {\delta c}^i (t) \left(
            \sigma(t) \frac{\partial L}{\partial q^i}(\chi(t)) - 
            \frac{\dd}{\dd t} \left(\sigma(t)\frac{\partial L}{\partial \dot{q}^i}(\chi(t))\right) 
            \right) \dd t \\
            &= 
            \int_0^t  \delta c^i(t) \sigma(t) \left( 
             \frac{\partial L}{\partial q^i}(\chi(t)) + 
            \frac{\dd}{\dd t} \frac{\partial L}{\partial \dot{q}^i}(\chi(t)) -
             \frac{\partial L}{\partial \dot{q}^i}(\chi(t))  \frac{\partial L}{\partial z}(\chi(t))
            \right) \dd t\,,
    \end{align*}
    where we have used that
    \begin{equation}
        \frac{\dd \sigma}{\dd t}(t) = -  \frac{\partial L}{\partial z}(\chi(t)) \sigma(t)\,.
    \end{equation}
    Since this must hold for every possible variation, we have
    \begin{equation*}
         \sigma(t) \left( 
             \frac{\partial L}{\partial q^i}(\chi(t)) + 
            \frac{\dd}{\dd t} \frac{\partial L}{\partial \dot{q}^i}(\chi(t)) -
             \frac{\partial L}{\partial \dot{q}^i}(\chi(t))  \frac{\partial L}{\partial z}(\chi(t)) \right) = 0\,,
    \end{equation*}
    thus obtaining the Herglotz equation.
\end{proof}
    
\subsection{Herglotz principle: vakonomic version}\label{subs:mechanicsLagMult}

Another way to understand this principle is to think of it as a constrained variational principle for curves on $Q \times \mathbb{R}$. This time, we will work on the manifold $\widetilde\Omega(q_0,q_1,z_0)$  of curves $\widetilde c = (c, \zeta):[0,1]\to Q \times \mathbb{R}$ such that $c(0)=q_0$, $c(1)=q_1$, $\zeta(0)=z_0$. Note that we do not constraint $\zeta(1)$. The tangent space at the curve $\widetilde c\in\widetilde\Omega(q_0,q_1,z_0)$ is given by
\begin{align}
    \T_{\widetilde c} \widetilde\Omega(q_0,q_1,z_0) &=  \{
        \delta \widetilde c(t) = (\delta c(t), \delta \zeta(t)) \in  \T_{\widetilde c(t)} (Q \times \mathbb{R})  \mid \delta c(0)=0, \, \delta c(1)=0, \delta \zeta(0) = 0  \}\,.
\end{align}
In this space, the action functional $\widetilde{\mathcal{A}}$ can be defined as an integral
\begin{equation}
    \begin{aligned}
        \widetilde{\mathcal{A}}: \widetilde\Omega(q_0, q_1,z_0) &\longrightarrow \mathbb{R}\\
        \widetilde{c} & \longmapsto \zeta(1) - \zeta(0) = \int_0^1 \dot{\zeta}(t)  \dd t\,.
    \end{aligned}
\end{equation}

We will restrict this action to the set of paths that satisfy the constraint $\dot{\zeta} = L$. For this, we consider the paths at the zero set of the constraint function $\phi_L$:
\begin{equation}\label{eq:lag_constraints}
    \phi_L(q,\dot{q},z,\dot{z}) = \dot{z} - L(q,\dot{q},z)\,.
\end{equation}
That is, we consider
\begin{equation}
    \widetilde\Omega_L (q_0, q_1,z_0)= \{ \widetilde{c} = (c, \zeta) \in \widetilde\Omega(q_0, q_1,z_0) \mid \phi_L \circ \dot{\widetilde{c}} = \dot{\zeta} - L(c,\dot{c}, \zeta) = 0 \}\,.
\end{equation}

Note that, since the Cauchy problem~\eqref{contact_var_ode} has a unique solution, the elements $(c, \zeta) \in \widetilde\Omega_L(q_0, q_1,z_0)$ are precisely $(c, \mathcal{Z}_{z_0}(c))$, where $c \in \Omega(q_0,q_1)$. That is, the map $\Id \times \mathcal{Z}_{z_0}: \Omega(q_0,q_1) \to \widetilde\Omega_L(q_0,q_1,z_0)$ given by ${(\Id \times \mathcal{Z}_{z_0})}(c) = (c, \mathcal{Z}_{z_0}(c))$ is a bijection, with inverse $(\pr_Q)_* (c,\zeta) = c$. Moreover, the following diagram commutes
\begin{equation}
    \begin{tikzcd}
        & \mathbb{R} &                                                             \\
{\Omega(q_0, q_1)} \arrow[rr, "\Id \times \mathcal{Z}_{z_0}"] \arrow[ru, "\mathcal{A}"] &            & {\widetilde\Omega_L (q_0, q_1,z_0)} \arrow[lu, "\widetilde{\mathcal{A}}"']
\end{tikzcd}\end{equation}
Hence $({c}, \zeta) \in \widetilde\Omega_L (q_0, q_1,z_0)$ is a critical point of the functional $\widetilde{\mathcal{A}}$ if and only if $c$ is a critical point of $\mathcal{A}$. So the critical points of $\mathcal{A}$ restricted to $\widetilde\Omega_L (q_0,q_1,z_0)$ are precisely the curves that satisfy the Herglotz equations.

\begin{theorem}[Herglotz variational principle, vakonomic version]\label{thm:Herglotz_principle_constrained}
    Let $L: \T Q  \times \mathbb{R} \to \mathbb{R}$ be a Lagrangian function and let $(c,\zeta)\in \widetilde\Omega_L(q_0, q_1, z_0)$. Then, $(c,\dot{c}, \zeta)$ satisfies the Herglotz equations:
    \begin{equation}\label{eq:Herglotz-eq-mechanics-constrained} 
            \frac{\dd}{\dd t} \frac{\partial L}{\partial \dot{q}^i} 
            - \frac{\partial L}{\partial q^i}=
            \frac{\partial L}{\partial \dot{q}^i} \frac{\partial L}{\partial z}\,,
    \end{equation}
    if and only if $(c,\zeta)$ is a critical point of $\widetilde{\mathcal{A}}\vert_{\widetilde\Omega_L(q_0, q_1, z_0)}$.
\end{theorem}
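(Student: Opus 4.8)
The plan is to avoid any fresh computation and instead exploit the structure already assembled immediately before the statement. The key observation is that the constraint manifold $\widetilde\Omega_L(q_0,q_1,z_0)$ is nothing but the graph of the operator $\mathcal{Z}_{z_0}$: by uniqueness of the solution to the Cauchy problem \eqref{contact_var_ode}, every constrained curve $(c,\zeta)$ satisfies $\zeta = \mathcal{Z}_{z_0}(c)$, and conversely every $(c,\mathcal{Z}_{z_0}(c))$ lies in $\widetilde\Omega_L$. Thus $\Id\times\mathcal{Z}_{z_0}\colon \Omega(q_0,q_1)\to\widetilde\Omega_L(q_0,q_1,z_0)$ is a bijection with inverse $(\pr_Q)_*$, and the commutative triangle displayed above records the identity $\widetilde{\mathcal{A}}\circ(\Id\times\mathcal{Z}_{z_0}) = \mathcal{A}_{z_0}$.

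First I would argue that this bijection matches up the critical points on the two sides. Since $\Id\times\mathcal{Z}_{z_0}$ is smooth with smooth inverse $(\pr_Q)_*$ — smoothness of $\mathcal{Z}_{z_0}$ being the standard smooth dependence of the solution of \eqref{contact_var_ode} on the curve $c$ — its tangent map $\T_c(\Id\times\mathcal{Z}_{z_0})$ is a linear isomorphism at every point. Pulling $\widetilde{\mathcal{A}}\vert_{\widetilde\Omega_L}$ back along this diffeomorphism returns precisely $\mathcal{A}_{z_0}$, so $\T_{(c,\zeta)}(\widetilde{\mathcal{A}}\vert_{\widetilde\Omega_L}) = \T_c\mathcal{A}_{z_0}\circ\T_{(c,\zeta)}(\pr_Q)_*$. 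Consequently $(c,\zeta)$ is a critical point of the constrained functional if and only if $c=(\pr_Q)_*(c,\zeta)$ is a critical point of $\mathcal{A}_{z_0}$.

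Then I would simply invoke \Cref{thm:Herglotz_principle}: the curve $c$ is a critical point of $\mathcal{A}_{z_0}$ exactly when $(c,\dot c,\mathcal{Z}_{z_0}(c))$ solves the Herglotz equations. Substituting $\zeta=\mathcal{Z}_{z_0}(c)$ closes the chain of equivalences and delivers \eqref{eq:Herglotz-eq-mechanics-constrained}, which is the assertion.

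The one genuine subtlety — where I expect the main obstacle to sit — is the infinite-dimensional bookkeeping hidden in the phrase \emph{critical point of the restriction}: one must be certain that differentiating $\widetilde{\mathcal{A}}$ \emph{along} the constraint submanifold $\widetilde\Omega_L$ corresponds, under the parametrization, to differentiating $\mathcal{A}_{z_0}$ along $\Omega(q_0,q_1)$. This is exactly what the bijectivity of $\Id\times\mathcal{Z}_{z_0}$ and the smoothness of $\mathcal{Z}_{z_0}$ buy us, and it is what lets us bypass Lagrange multipliers altogether. As a more computational alternative one could adjoin a multiplier $\mu(t)$ enforcing $\phi_L\circ\dot{\widetilde c}=0$, vary $c$, $\zeta$ and $\mu$ in the augmented functional $\widetilde{\mathcal{A}}+\int_0^1 \mu\,(\phi_L\circ\dot{\widetilde c})\,\dd t$, and then eliminate $\mu$; but the graph description above shows this detour is unnecessary in the present setting.
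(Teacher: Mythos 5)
Your proof is correct, and it is essentially the argument the paper itself gives in the paragraph immediately preceding the theorem: by uniqueness of solutions of the Cauchy problem \eqref{contact_var_ode}, the constraint set $\widetilde\Omega_L(q_0,q_1,z_0)$ is the graph of $\mathcal{Z}_{z_0}$, the bijection $\Id\times\mathcal{Z}_{z_0}$ (with inverse $(\pr_Q)_*$) intertwines $\widetilde{\mathcal{A}}$ with $\mathcal{A}_{z_0}$, so critical points correspond and the statement reduces to \cref{thm:Herglotz_principle}; your smoothness caveat about $\mathcal{Z}_{z_0}$ is handled at the same level of informality by the paper. Be aware, however, that what the paper places inside its labeled proof environment is deliberately a \emph{different}, self-contained argument, announced as ``an alternative proof'': it invokes the infinite-dimensional Lagrange multiplier theorem (\cref{thm:Lagrange_multipliers}) with $E=L^2([0,1]\to\mathbb{R})$ and constraint map $g(\widetilde c)=\phi_L\circ(\widetilde c,\dot{\widetilde c})$, derives the Euler--Lagrange equations of $L_\lambda=\dot z-\lambda\phi_L$ together with the natural boundary condition $\lambda(1)=1$ (the endpoint of $\zeta$ being free), obtains $\dot\lambda=-\lambda\,\pdv{L}{z}$ from the $z$-equation, and substitutes into the $q$-equation to recover \eqref{eq:Herglotz-eq-mechanics-constrained}. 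The comparison is not merely cosmetic: your reduction leans exactly on the features of mechanics that fail for fields, namely existence, uniqueness and smooth dependence for the constraint equation, which are what make $\widetilde\Omega_L$ a graph over $\Omega(q_0,q_1)$. The multiplier proof never uses $\mathcal{Z}_{z_0}$, and it is the argument the paper generalizes in Sections \ref{sec:3} and \ref{sec:4}, where the constraint \eqref{eq:zmu_def0} no longer has unique (or even any) solutions and where eliminating the multiplier produces the extra closed-action-dependence condition \eqref{condcerrada} — a phenomenon invisible to the graph argument. So your dismissal of the multiplier route as an unnecessary ``detour'' is fair for mechanics, but it is precisely the detour that carries the rest of the paper.
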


We will provide an alternative proof by directly finding the critical points of the functional $\widetilde{\mathcal{A}}$ restricted to $\widetilde\Omega_L (q_0, q_1,z_0) \subseteq \widetilde\Omega (q_0, q_1,z_0))$ using the following infinite-dimensional version of the Lagrange multiplier theorem (see \cite{Arnold1997} for more details).
\begin{theorem}[Lagrange multiplier Theorem]\label{thm:Lagrange_multipliers}
    Let $M$ be a smooth manifold and let $E$ be a Banach space. Consider a smooth submersion $g:M \to E$ such that $A=g^{-1}(\{0\})$ is a smooth submanifold, and a smooth function $f:M \to \mathbb{R}$. Then $p \in A$ is a critical point of $f \vert_A$ if and only if there exists $\widehat{\lambda} \in E^*$ such that $p$ is a critical point of $f + \widehat{\lambda} \circ g$. 
\end{theorem}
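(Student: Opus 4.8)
The plan is to reduce both \emph{critical point} conditions to linear-algebraic statements about the differentials $\T_p f$ and $\T_p g$ at the point $p$, and then exploit the splitting of the tangent space provided by the submersion hypothesis. Recall that $p \in A$ is a critical point of $f\vert_A$ precisely when $\T_p f$ annihilates $\T_p A$, whereas $p$ is a critical point of $f + \widehat{\lambda} \circ g$ on $M$ precisely when $\T_p f + \widehat{\lambda} \circ \T_p g = 0$ as elements of the topological dual $(\T_p M)^*$. Since $g$ is a submersion with $A = g^{-1}(\{0\})$, the differential $\T_p g \colon \T_p M \to E$ is a surjective bounded linear map with kernel exactly $\T_p A$, and by definition of a submersion between Banach manifolds this kernel is complemented: there is a closed subspace $W \subseteq \T_p M$ with $\T_p M = \T_p A \oplus W$ such that the restriction $\restr{\T_p g}{W} \colon W \to E$ is a topological isomorphism.

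The reverse implication is immediate. If $\T_p f + \widehat{\lambda} \circ \T_p g = 0$, then restricting to $\T_p A = \ker \T_p g$ annihilates the second term and yields $\restr{\T_p f}{\T_p A} = 0$, so $p$ is a critical point of $f\vert_A$.

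For the forward implication, suppose $\T_p f$ vanishes on $\ker \T_p g$. I would define the candidate multiplier by $\widehat{\lambda} \coloneqq - \T_p f \circ (\restr{\T_p g}{W})^{-1}$, which is bounded as a composition of bounded maps and hence lies in $E^*$. Writing any $v \in \T_p M$ as $v = k + w$ with $k \in \T_p A$ and $w \in W$, one computes $\T_p f(v) = \T_p f(w)$ because $\T_p f(k) = 0$, while $\widehat{\lambda}(\T_p g(v)) = \widehat{\lambda}(\T_p g(w)) = - \T_p f(w)$ by construction; adding these gives $(\T_p f + \widehat{\lambda} \circ \T_p g)(v) = 0$ for all $v$, so $p$ is a critical point of $f + \widehat{\lambda} \circ g$.

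The one place where the infinite-dimensional setting requires genuine care — and the step I expect to be the crux — is verifying that $\widehat{\lambda}$ is continuous, that is, that it genuinely belongs to the topological dual $E^*$ rather than being a merely algebraic linear functional. This is exactly where the definition of a submersion between Banach manifolds is essential: it guarantees that $\ker \T_p g$ splits and that $\restr{\T_p g}{W}$ is a \emph{topological} isomorphism (equivalently, once surjectivity and completeness are available, one may invoke the open mapping theorem), so that $(\restr{\T_p g}{W})^{-1}$ is bounded and therefore $\widehat{\lambda}$ is bounded. In finite dimensions this boundedness is automatic, but the Banach-space formulation is precisely what is needed for the application to the constrained Herglotz functional $\widetilde{\mathcal{A}}\vert_{\widetilde\Omega_L(q_0,q_1,z_0)}$.
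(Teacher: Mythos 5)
The paper never proves this statement: Theorem~\ref{thm:Lagrange_multipliers} is imported as a known infinite-dimensional result, with a pointer to \cite{Arnold1997}, and is only \emph{used} as a black box in the proof of the vakonomic Herglotz principle. So there is no internal proof to compare yours against; your argument supplies the missing content, and it is correct. The reduction of both criticality conditions to linear statements about $\T_p f$ and $\T_p g$, the easy direction by restriction to $\ker \T_p g$, the construction $\widehat{\lambda} = -\T_p f \circ \bigl(\restr{\T_p g}{W}\bigr)^{-1}$, and the verification on the decomposition $v = k + w$ are all sound, and you correctly isolate the only genuinely infinite-dimensional issue, namely continuity of $\widehat{\lambda}$, which the splitting plus the open mapping theorem settle. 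Two refinements are worth recording. First, you use $\ker \T_p g = \T_p A$ without justification; the inclusion $\T_p A \subseteq \ker \T_p g$ is clear because $g$ vanishes along $A$, but the reverse inclusion is precisely what the regular value theorem for Banach submersions provides, so it should be invoked explicitly. Second, the closed complement $W$ exists under the standard (Lang-style) definition of submersion between Banach manifolds, but your argument can be made independent of any choice of splitting: since $\T_p f$ vanishes on $\ker \T_p g$, it descends to a bounded functional on the quotient Banach space $\T_p M / \ker \T_p g$, and $\T_p g$ induces a continuous bijection of this quotient onto $E$, which is a topological isomorphism by the open mapping theorem; composing the two gives $\widehat{\lambda} \in E^*$ with no complement chosen. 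That variant proves the theorem under the weaker hypothesis that each $\T_p g$ is merely surjective, which slightly enlarges the class of constraint maps $g$ to which the paper's later application (to $g(\widetilde{c}) = \phi_L \circ (\widetilde{c}, \dot{\widetilde{c}})$) is licensed.
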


\begin{proof}[Proof of Herglotz variational principle, vakonomic version]

    We will apply this result to our situation. In the notation of the theorem, $M= \widetilde{\Omega}(q_0,q_1,z_0)$ is the smooth manifold. We pick the Banach space $E = L^2([0,1] \to \mathbb{R})$ of square integrable functions. This space is, indeed, a Hilbert space with inner product
    \begin{equation*}
        \langle \alpha, \beta \rangle = \int_0^1 \alpha(t) \beta (t) \dd t\,.
    \end{equation*}
    Recall that, by the Riesz representation theorem, there exists a bijection between $L^2([0,1] \to \mathbb{R})$ and its dual such that for every $\widehat{\alpha} \in L^2([0,1] \to \mathbb{R})^*$ there exists $\alpha \in L^2([0,1] \to \mathbb{R})$ such that $\hat{\alpha}(\beta)=  \langle \alpha, \beta \rangle$ for all $\beta \in L^2([0,1] \to \mathbb{R})$.
    
    Our constraint function is
    \begin{equation*}
        \begin{aligned}
            g: \widetilde{\Omega}(q_0,q_1,z_0) &\longrightarrow  L^2([0,1] \to \mathbb{R})\\
            \widetilde{c} &\longmapsto
            (\phi_L) \circ (\widetilde{c}, \dot{\widetilde{c}})\,,
        \end{aligned}
    \end{equation*}
    where $\phi_L$ is a constraint function locally defining $A = g^{-1}(0) = \widetilde{\Omega}_L(q_0,q_1,z_0)$.
    
    By Theorem~\ref{thm:Lagrange_multipliers}, $c$ is a critical point of $f=\widetilde{\mathcal{A}}$ restricted to $\widetilde{\Omega}_L (q_0,q_1,z_0)$ if and only if there exists $\widehat{\lambda} \in L^2([0,1] \to \mathbb{R})^*$ (which is represented by $\lambda \in L^2([0,1] \to \mathbb{R})$) such that $c$ is a critical point of $\widetilde{\mathcal{A}}_\lambda = \widetilde{\mathcal{A}} + \widehat{\lambda} \circ g$.
    
    Indeed,
    \begin{equation*}
        \widetilde{\mathcal{A}}_\lambda = \int_0^1 L_\lambda(\widetilde{c}(t), \dot{\widetilde{c}}(t)) \dd t\,,
    \end{equation*}
    where
    \begin{equation*}
        L_\lambda(q,z,\dot{q},\dot{z})= \dot{z} - \lambda \phi_L(q,z, \dot{q},\dot{z})\,.
    \end{equation*}
    
    Since the endpoint of $\zeta$ is not fixed, the critical points of this functional $\widetilde{\mathcal{A}}_\lambda$ are the solutions of the Euler--Lagrange equations for $L_\lambda$ that satisfy the natural boundary condition
    \begin{equation*}
        \frac{\partial L_\lambda}{\partial \dot{z}}(\widetilde{c}(1),\dot{\widetilde{c}}(1)) = 1- \lambda(1) \frac{\partial \phi_L}{\partial \dot{z}}(\widetilde{c}(1),\dot{\widetilde{c}}(1)) = 0\,.
    \end{equation*}
    Since $\phi_L = \dot{z}-L$, this condition reduces to $\lambda(1)=1$.
    
    The Euler--Lagrange equations of $L$ are given by
    \begin{subequations}\label{eq:euler_langrange_contact}
        \begin{align} \label{eq:euler_langrange_q}
            \frac{\dd }{\dd t} \left(\lambda(t)  
            \frac{\partial \phi_L(\widetilde{c}(t),\dot{\widetilde{c}}(t))}{\partial \dot{q}^i}  \right) - \lambda(t) \frac{\partial \phi_L(\widetilde{c}(t),\dot{\widetilde{c}}(t))}{\partial q^i} &= 0\,, \\
            \frac{\dd }{\dd t} \left(\lambda(t)     \label{eq:euler_langrange_z}
            \frac{\partial \phi_L(\widetilde{c}(t),\dot{\widetilde{c}}(t))}{\partial \dot{z}}  \right) - \lambda(t) \frac{\partial \phi_L(\widetilde{c}(t),\dot{\widetilde{c}}(t))}{\partial z} &= 0\,,
        \end{align}
    \end{subequations}
    Since $\phi_L = \dot{z} - L$, the equation \eqref{eq:euler_langrange_z} for $z$ is just
    \begin{equation*}
        \frac{\dd \lambda(t)}{\dd t} = - \lambda(t) \frac{\partial L}{\partial z}\,.
    \end{equation*}
    Substituting on \eqref{eq:euler_langrange_q} and dividing by $\lambda$, we obtain the Herglotz equations \eqref{eq:Herglotz-eq-mechanics-constrained}.
\end{proof}

\subsection{Herglotz principle: nonholonomic version}
     Another way to obtain the Herglotz equation of motion is through a non-linear non-holonomic principle, the so-called Chetaev principle \cite{Gra2003}. Instead of restricting the space of admissible curves $\widetilde{\Omega}_L(q_0,q_1,z_0) \subseteq \Omega(q_0,q_1,z_0)$ and find the critical points on this submanifold, we directly restrict the space of admissible variations, so that the differential of the action has to vanish only in a selection of variations. Hence, the solutions of this principle are not necessarily critical points of the action functional restricted to any space.

\begin{definition}\label{dfn:noholonomicmech}
    A section $\widetilde c = (c, c_z) \in \widetilde{\Omega}_L(q_0,q_1, z_0)$ satisfies the \emph{non-holonomic Herglotz variational principle} if $\T_{\widetilde{c}}\mathcal A(\delta c)=0\,$ for all vector fields $\delta c\in \T_c\widetilde{\Omega}(q_0,q_1,z_0)$ such that $\dd \phi_L(\mathcal{I}(\delta c))=0$, where $\mathcal{I}$ denotes the vertical endomorphism of $\T(\T(Q\times\mathbb{R}))$.
\end{definition}

 If $\delta \widetilde{c}=\delta q^i\dfrac{\partial}{\partial q^i} + \delta z\dfrac{\partial}{\partial z}$, then
$$
\dd \phi_L(\mathcal{I}(\delta c)) = \dd \phi_L\left(\delta q^i\frac{\partial}{\partial \dot{q}^i} + \delta z\frac{\partial}{\partial \dot{z}}\right)=\delta z-\delta q^i\frac{\partial L}{\partial \dot{q}^i}\,.
$$
Then, the nonholonomic dynamics are given by \cite{cendra2004,Gra2003}.

\begin{theorem}[Herglotz's variational principle, nonholonomic version]\label{thm:herglotz-mec-nonhol}
    Let $L: \T Q \times \RR \to \RR$ be a Lagrangian function and let $\widetilde c = (c, c_z) \in \widetilde{\Omega}_L(q_0,q_1, z_0)$. Then $\widetilde c$ satisfy the non-holonomic Herglotz variational principle if, and only if, $(c,\dot{c},c_z)$ satisfies Herglotz's equations:
    \begin{equation}
        \begin{aligned}
              \frac{\dd }{\dd t} \left({\pdv{L}{\dot{q}^i}}  \right) - \pdv{L}{q^i}  &=
              \pdv{L}{\dot{q}^i} \pdv{L}{z}\,,\\
              \dot{z} &= L\,.
        \end{aligned}
    \end{equation}
\end{theorem}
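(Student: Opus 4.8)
Since membership $\widetilde c=(c,c_z)\in\widetilde\Omega_L(q_0,q_1,z_0)$ is assumed throughout the statement, the constraint $\dot c_z=L(c,\dot c,c_z)$ — that is, the second Herglotz equation $\dot z=L$ — holds by hypothesis on both sides of the equivalence and needs no proof. The plan is therefore to show that, for such a section, the vanishing of the first variation on all Chetaev variations is equivalent to the first Herglotz equation. I would take the action in its integral form $\mathcal A(\widetilde c)=\int_0^1 L(c,\dot c,c_z)\,\dd t$; note that on the constraint surface this equals $\int_0^1\dot c_z\,\dd t=c_z(1)-c_z(0)$, so it agrees there with the endpoint action, but for varying against displacements that leave the surface it is the integral form that must be used. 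Its first variation along $\delta\widetilde c=\delta q^i\,\partial/\partial q^i+\delta z\,\partial/\partial z\in\T_{\widetilde c}\widetilde\Omega(q_0,q_1,z_0)$ is
\begin{equation*}
    \T_{\widetilde c}\mathcal A(\delta\widetilde c)=\int_0^1\left(\pdv{L}{q^i}\,\delta q^i+\pdv{L}{\dot q^i}\,\delta\dot q^i+\pdv{L}{z}\,\delta z\right)\dd t.
\end{equation*}

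Next I would integrate the single term carrying a time derivative of a variation by parts. The boundary contribution $[\pdv{L}{\dot q^i}\delta q^i]_0^1$ vanishes because $\delta q(0)=\delta q(1)=0$, while the $z$-slot contributes no boundary term, as the integrand does not depend on $\dot z$; this last point is essential, since the endpoint $\delta z(1)$ is not constrained to vanish. One is left with
\begin{equation*}
    \T_{\widetilde c}\mathcal A(\delta\widetilde c)=\int_0^1\left[\left(\pdv{L}{q^i}-\dv{}{t}\pdv{L}{\dot q^i}\right)\delta q^i+\pdv{L}{z}\,\delta z\right]\dd t.
\end{equation*}

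Now I would impose admissibility. The computation preceding the theorem identifies the Chetaev condition $\dd\phi_L(\mathcal I(\delta c))=0$ with the algebraic relation $\delta z=\pdv{L}{\dot q^i}\delta q^i$; substituting it collapses the integrand onto $\delta q^i$ alone:
\begin{equation*}
    \T_{\widetilde c}\mathcal A(\delta\widetilde c)=\int_0^1\left(\pdv{L}{q^i}-\dv{}{t}\pdv{L}{\dot q^i}+\pdv{L}{z}\,\pdv{L}{\dot q^i}\right)\delta q^i\,\dd t.
\end{equation*}
The observation that makes both implications go through is that the Chetaev variations are freely parametrised by $\delta q$: for any $\delta q$ with $\delta q(0)=\delta q(1)=0$, defining $\delta z:=\pdv{L}{\dot q^i}\delta q^i$ produces a genuine element of $\T_{\widetilde c}\widetilde\Omega(q_0,q_1,z_0)$, and since $\delta z$ is pointwise proportional to $\delta q$ one has $\delta z(0)=0$ automatically. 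Hence requiring $\T_{\widetilde c}\mathcal A(\delta\widetilde c)=0$ for all admissible $\delta\widetilde c$ is the same as requiring the last integral to vanish for every $\delta q$ that vanishes at the endpoints.

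The result then follows in both directions from the fundamental lemma of the calculus of variations: that integral vanishes for all such $\delta q$ if and only if its bracketed coefficient is identically zero, which is exactly $\dv{}{t}\pdv{L}{\dot q^i}-\pdv{L}{q^i}=\pdv{L}{\dot q^i}\pdv{L}{z}$; combined with $\dot z=L$ this is the stated system. I expect the main obstacle to be conceptual bookkeeping rather than analysis — specifically, insisting on the integral form $\int L\,\dd t$ for the action (the endpoint form $\int\dot\zeta\,\dd t$ would make the principle vacuous, as its variation reduces to $\delta z(1)$, which is forced to zero by the Chetaev relation) and refraining from integrating the $z$-slot by parts, so that the free boundary value $\delta z(1)$ never enters the computation.
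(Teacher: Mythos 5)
Your proof is correct. It is worth noting that the paper itself offers no proof of this mechanical theorem: after computing the Chetaev condition $\dd\phi_L(\mathcal{I}(\delta c))=\delta z-\delta q^i\frac{\partial L}{\partial \dot q^i}$, it simply defers to the references \cite{cendra2004,Gra2003}. The closest in-paper comparison is therefore the proof of the field-theoretic analogue (Theorem \ref{thm:Herglotz_principle_fields}), which follows the same opening steps as yours --- vary the integral of the Lagrangian, integrate by parts, discard the boundary term --- but closes the argument differently: there, the vanishing of the variation on all Chetaev variations is converted into the existence of Lagrange multipliers $\lambda_\alpha$ (the Euler--Lagrange expressions must lie in the annihilator of the constraint distribution), and the multipliers are then identified as $\lambda_\mu=\partial L/\partial z^\mu$ by combining the two resulting equations. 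You instead eliminate $\delta z$ outright via the Chetaev relation and invoke the fundamental lemma of the calculus of variations; this is more elementary (no existence statement for multipliers is needed) and makes both directions of the equivalence transparent, since you note explicitly that the admissible variations are freely parametrised by $\delta q$ (with $\delta z:=\frac{\partial L}{\partial \dot q^i}\delta q^i$ automatically vanishing at $t=0$). Your preliminary discussion of which functional is being varied is also a genuine clarification rather than pedantry: Definition \ref{dfn:noholonomicmech} writes $\T_{\widetilde c}\mathcal{A}$ without specifying the extension of the action off the constraint surface, and, as you observe, varying the endpoint form $\zeta(1)-\zeta(0)$ against Chetaev variations (which need not be tangent to $\widetilde\Omega_L$) gives identically zero, so only the integral form $\int_0^1 L\,\dd t$ --- the one matching the field-theoretic Definition \ref{dfn:noholonomicfield} --- yields a non-vacuous principle.
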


\section{The Herglotz principle for fields}\label{sec:3}

In the literature, there exists a non-covariant formulation of the Herglotz principle for fields theories~\cite{Georgieva2003a}. A more general approach is given in \cite{Lazo2018}, although only a class of Lagrangian functions is considered, which we call Lagrangians with closed action dependence (see Definition \ref{dfn:ClosedLag}).

The method presented in \cite{Lazo2018} uses an implicit argument, similar to the method presented in Section \ref{sect:MecImp} for mechanical systems. We propose two alternative methods: the non-holonomic principle, which is compatible with the $k$-contact \cite{Gas2020,Gaset2021}, $k$-cocontact \cite{Riv2022} and multicontact \cite{de_leon_multicontact_2022} formulations; and the vakonomic principle, which can be extended to higher-order Lagrangians.

Consider a Lagrangian function $L(x^\mu,u^a,u^a_\mu,z^\mu)$ depending on the coordinates $(x^\mu)$ of an $m$-dimensional spacetime $M$, the values of fields $u^a$, their derivatives $u^a_\mu$ at the point $x$ and the variables $z^\mu$ that, in this context do not represent the action, but the \emph{action density}. In order to compute the action of a local field $\sigma$ defined on $D \subseteq M$, we find a vector field $\zeta^\mu$ such that
\begin{equation}\label{eq:zmu_def0}
    D_\mu \zeta^\mu = L\,.
\end{equation}
Then, the action is
\begin{equation}\label{eq:herglotz_field_0}
    \int L \dd^n x = \int D_\mu \zeta^\mu  \dd^n x
    = \int_{\partial D} \zeta^\mu  \eta_\mu \dd \sigma\,,
\end{equation}
where $\eta_\mu$ is the normal unit vector to the surface and $\dd \sigma$ is the surface differential. The last equality follows from Stokes' Theorem. Note that if $M$ is one-dimensional, the action is just $\zeta(1)-\zeta(0)$, and thus we recover the Herglotz action for mechanical systems.

The critical points of this action along the local fields $\sigma$ with the same values on the boundary would be the solutions to the Herglotz field equations
\begin{equation}\label{eq:Herglotz-equations}
    D_\mu \left( \frac{\partial L}{\partial u^a_\mu} \right) - \frac{\partial L}{\partial u^a}  = \frac{\partial L}{\partial u^a}  \frac{\partial L}{\partial z^\mu}\,.
\end{equation}

These equations are obtained in \cite{Lazo2018} through an implicit argument, in a similar spirit to the proof of~\cref{thm:Herglotz_principle}. Note that the Lagrangian theory of $k$-contact fields~\cite{Gaset2021} provides the same equations.

However, we find two issues on this derivation of the variational principle. First of all, the definition of $z^\mu$ in equation \eqref{eq:zmu_def0} depends on a metric on $M$ in order to compute its divergence. This can be easily fixed by taking $z^\mu$ to be components of a $(k-1)$-differential form instead of a vector field.

The second issue is more subtle. The solution of~\eqref{eq:zmu_def0} is not unique, and hence the action is not well-defined. This is not a problem if the Lagrangian does not depend on $\zeta^\mu$, because in this case all the solutions to~\eqref{eq:zmu_def0} differ only by an exact term, whose integral is zero, and does not contribute to the action, but this is not true in general. Indeed, $\zeta$ may appear in equation~\eqref{eq:herglotz_field_0}. In~\cite{Lazo2018} the authors assume some conditions on the Lagrangian in order to find a unique solution. Moreover, \eqref{eq:zmu_def0} might have no solutions and hence we will need to add more constraints in order to ensure the existence of solutions.

One way to fix this problem is to prescribe boundary conditions on~\eqref{eq:zmu_def0} that make the solution unique. However, we will avoid this problem choosing a \enquote{constrained formulation} of this problem, in the same spirit of Theorems \ref{thm:Herglotz_principle_constrained} and \ref{thm:herglotz-mec-nonhol}, instead of the \enquote{implicit} approach used in~\cite{Lazo2018}.
\subsection{Geometric structures}

Let $M$ be and $m$-dimensional orientable manifold representing the spacetime and consider a fiber bundle $E \to M$. Let $(x^\mu, u^a)$ be adapted coordinates on $E$ and let $\dd^mx=\dd x^1\wedge\dots\wedge\dd x^m$ be a volume form on $M$. Then, we will denote $\dd^{m-1}x_\mu=i_{\frac{\partial}{\partial x^\mu}}\dd^mx\in\Omega^{m-1}(M)$. 
The configuration space is the bundle $\pi:E\times_M\Lambda^{m-1}M\rightarrow M$, because the action densities are $(m-1)$-forms on $M$. The adapted coordinates of the first jet bundle $J^1(E\times_M\Lambda^{m-1}M)$ are $(x^\mu,u^a,u^a_\mu,z^\nu,z^\nu_\mu)$, where $z^\nu$ are the coordinates of $\Lambda^{m-1}(M)$ induced by the local basis $\left\{\dd^{m-1}x_\nu\right\}_{\nu=1,\dots,m}$. We consider the first jet of the action densities because it is necessary to intrinsically define the constraint~\eqref{eq:zmu_def0}.

Given a coordinate system, the total derivative $D_\mu:\Cinfty(J^1(E \times_M \Lambda^{m-1}M))\rightarrow \Cinfty(J^{2}(E \times_M \Lambda^{m-1}M))$, for $\mu=1,\dots,m$, is a derivation given by
\begin{equation*}
D_\mu f=\frac{\partial f}{\partial x^\mu}+u^a_{\mu}\frac{\partial f}{\partial u^a}+z^\nu_{\mu}\frac{\partial f}{\partial z^\nu}+u^a_{\tau\mu}\frac{\partial f}{\partial u^a_\tau}+z^\nu_{\tau\mu}\frac{\partial f}{\partial z^\nu_\tau}\,,
\end{equation*}
where $f\in \Cinfty(J^1(E \times_M \Lambda^{m-1}M))$.

For any section $\rho:M\rightarrow E\times_M\Lambda^{m-1}M$, the total derivative satisfies the property
$$(j^2\rho)^*(D_\mu f)=\frac{\partial (j^1\rho)^*f}{\partial x^\mu}\,.
$$

Given a vector field $\xi\in\mathfrak{X}(E\times_M\Lambda^{m-1}M)$ with local flow $\gamma_r:E\times_M\Lambda^{m-1}M\rightarrow E\times_M\Lambda^{m-1}M$, its complete lift to $J^1(E\times_M\Lambda^{m-1}M)$  is the vector field $\xi^1 \in\mathfrak{X}(J^1(E\times_M\Lambda^{m-1}M))$ whose local flow is $j^1\gamma_r$. If $\xi\in\mathfrak{X}(E\times_M\Lambda^{m-1}M)$ is a vertical vector field with respect to the projection $\pi$ with local expression
$$
    \xi=\xi^a\frac{\partial}{\partial u^a}+\xi^\nu\frac{\partial}{\partial z^\nu}\,,
$$
its complete lift is
$$
    \xi^1=\xi^a\frac{\partial}{\partial u^a}+\left(\frac{\partial\xi^a}{\partial x^\mu}+u^b_\mu\frac{\partial\xi^a}{\partial u^b}+z^\tau_\mu\frac{\partial\xi^a}{\partial z^\tau}\right)\frac{\partial}{\partial u^a_\mu}+\xi^\nu\frac{\partial}{\partial z^\nu}+\left(\frac{\partial\xi^\nu}{\partial x^\mu}+u^b_\mu\frac{\partial\xi^\nu}{\partial u^b}+z^\tau_\mu\frac{\partial\xi^\nu}{\partial z^\tau}\right)\frac{\partial}{\partial z^\nu_\mu}\,.
$$

The Lagrangian density $\mathcal{L}: J^1(E\times_M\Lambda^{m-1}M) \to \Lambda^m M$ is a fiber bundle morphism over $M$. In local coordinates, $\mathcal{L}(x^\mu, u^a, u^a_\mu, z^\mu) = L(x^\mu, u^a, u^a_\mu, z^\mu) \dd^m x$.
In order to define intrinsically the constraint~\eqref{eq:zmu_def0}, we define the \emph{canonical differential action form} as
\begin{align*}
    \overline{DS}:J^1\Lambda^{m-1}M & \rightarrow\Lambda^m(M)
    \\
    j^1\alpha & \longmapsto \dd\alpha\,.
\end{align*}
The name is inspired by the canonical action form introduced in \cite{de_leon_multicontact_2022}. In local coordinates, it reads
$$
    \overline{DS}(z^\nu,z^\nu_\mu)=z^\mu_\mu\dd^mx\,.
$$
Then, the constraint ~\eqref{eq:zmu_def0} can be written as
\begin{equation}
    \Phi=\tau^*\overline{DS}-\mathcal{L}=0\,,
\end{equation}
where $\tau:J^1(E\times_M\Lambda^{m-1}M)\rightarrow J^1\Lambda^{m-1}M$ is the natural projection. In local coordinates, $\Phi=\phi\dd^mx$, with $\phi=z^\mu_\mu-L$. The situation is described by the following commutative diagram
\begin{equation*}
\begin{tikzcd}
    & J^1(E\times_M\Lambda^{m-1}M) \arrow[ld] \arrow[d, "\pi^1"] \arrow[rd, "\tau"'] \arrow[rrd, "\mathcal{L}"] \arrow[ddd, "\bar\pi"', bend right=55, shift right=2] &            &        \\
J^1E \arrow[d] & E\times_M\Lambda^{m-1}M \arrow[dd, "\pi"]    & J^1\Lambda^{m-1}M \arrow[d] \arrow[r, "\overline{DS}"'] & \Lambda^mM \\
E \arrow[rd]   &     & \Lambda^{m-1}M \arrow[ld]   &     \\
 & M \arrow[uu, "\rho"', bend right] \arrow[uuu, "j^1\rho"', bend right=55, shift right=2] \arrow[lu, "\sigma", bend left] \arrow[ru, "\zeta"', bend right]       &     &      
 \end{tikzcd}
\end{equation*}

Given a submanifold $D\subset M$, the set of sections that satisfy the constraint $\Phi$ is denoted by
\begin{equation*}
 \Omega= \set{\rho \in \Gamma_D(E \times_M \Lambda^{m-1} M)\, \text{ such that }\, (j^1\rho)^*\Phi=0}\,.
\end{equation*}
Then, the action associated to $\mathcal{L}$ is:
\begin{align*}
    \mathcal{A}:\Omega&\longrightarrow \mathbb{R} \\
    \rho & \longmapsto\int_D (j^1\rho)^*\mathcal{L}\,.
\end{align*}
In general, the variations of this action are the elements tangent to $\rho$ which vanish at $\partial D$, which can be seen as the $\pi$-vertical vector fields along $\rho$. Thus, we define:
\begin{align*}
  \T_\rho\Gamma_D = \{\xi:D\rightarrow \T(E \times_M \Lambda^{m-1} M)\mid \xi(x)\in \T_{\rho(x)}(E \times_M \Lambda^{m-1} M)\,,\ \T\pi(\xi)=0\,,\ \xi|_{\partial D}=0\}\,.
\end{align*}

We want to find the sections which are ``critical'' for the action $\mathcal{A}$ under the constraint $\Phi$. As we have commented before, this problem is not well formulated. The constraint $\Phi$ involve velocities, and there are several non-equivalent ways to select which variations have to be taken \cite{Gra2003}. Inspired by the case of contact mechanics \cite{de_leon_constrained_2021}, we will describe two different non-equivalent approaches: the {\it non-holonomic} and the {\it vakonomic} variational principles.

\subsection{Herglotz principle for fields: non-holonomic version}

The approach presented in this section is inspired on \cite{binz_nonholonomic_2002,vankerschaver_geometric_2005, Gra2003}.
Let $D \subseteq M$ be an oriented manifold with compact closure and boundary $\partial D$. The vertical lift \cite{peretesis} is a morphism of vector bundles $
\mathcal{S}:\T^* M\otimes_{J^1(E \times_M \Lambda^{m-1}M)}V(\pi)\rightarrow V(\pi^1)$ over the identity of $J^1(E \times_M \Lambda^{m-1}M)$ such that, for any $j^1_x\phi\in J^1(E \times_M \Lambda^{m-1}M)$, $\beta\in \T^*M\otimes_{J^1(E \times_M \Lambda^{m-1}M)}V(\pi)$ and $f\in \Cinfty(J^1_{\phi(x)}(E \times_M \Lambda^{m-1}M))$, we have:
$$
\mathcal{S}_{j^1_x\phi}(\beta)(f)=\left.\frac{\dd}{\dd t}\right|_{t=0}f(j^1_x\phi+t\beta)\,.
$$
We have that $\T^*M\otimes_{J^1(E \times_M \Lambda^{m-1}M)}V(\pi)$ is the vector bundle associated to the affine bundle $\pi^1:J^1(E \times_M \Lambda^{m-1}M)\rightarrow E\times_M\Lambda^{m-1}M$ and, hence, using the same coordinates $(x^\mu, u^a,z^\nu, u^a_\mu,z^\nu_\mu)$, the local expression of the vertical lift is
$$
\mathcal{S}=\dd u^a\otimes \frac{\partial}{\partial x^\mu}\otimes\frac{\partial}{\partial u^a_\mu}+\dd z^\nu\otimes \frac{\partial}{\partial x^\mu}\otimes\frac{\partial}{\partial z^\nu_\mu}\,.
$$

The dependence on the velocities of the constraint is  implemented in the non-holonomic version as a force. This can be formalized in different ways. For instance, in \cite{binz_nonholonomic_2002} the authors use the vertical endomorphism. In our problem the constraint is given by the $m$-form $\Phi$ instead of a function, and we find that the vertical lift gives a more direct derivation of the equations. The vertical lift is a $(2,1)$-tensor, and we are interested in the contraction of both contravariant entrances with the form $\dd \Phi$:
$$
\varphi=i_{\mathcal{S}}\dd\Phi=\left(\frac{\partial \phi}{\partial u^a_\mu}\dd u^a+\frac{\partial \phi}{\partial z^\nu_\mu}\dd z^\nu\right)\otimes \dd^{m-1}x_\mu\,.
$$

\begin{definition}\label{dfn:noholonomicfield} A section
$\rho\in\Omega$ satisfies the \emph{non-holonomic Herglotz variational principle} if
\begin{equation}
\T_\rho\mathcal A(\xi)=\int_D (j^1\rho)^*(\lieD{\xi^1}\mathcal{L})=0\,.
\end{equation}
for all vector fields $\xi\in \T_\rho\Gamma_D$ such that
\begin{equation}  \label{eq:nonholonomic}
\varphi(\xi^1)=0\,,
\end{equation}
\end{definition}
where $\lieD{}$ denotes the Lie derivative and $\varphi(\xi^1)$ is the contraction of $\xi^1$ with the first entrance of $\varphi$. In local coordinates, it reads
$$
\varphi(\xi^1)=\left(\frac{\partial \phi}{\partial u^a_\mu}\xi^a+\frac{\partial \phi}{\partial z^\nu_\mu}\xi^\nu\right)\otimes \dd^{m-1}x_\mu\,.
$$

\begin{theorem}\label{thm:Herglotz_principle_fields}
    Let $\mathcal{L}: J^1(E\times_M\Lambda^{m-1}M) \to \Lambda^m M$ be a Lagrangian density  and let $\rho\in \Omega$. Then, $j^1\rho$ satisfies the Herglotz field equations
    \begin{align}\label{eq:herFilEq}
            D_\mu \left( \frac{\partial L}{\partial u^a_\mu} \right) - \frac{\partial L}{\partial u^a}  &= \frac{\partial L}{\partial u^a_\mu} \frac{\partial L}{\partial z^\mu}
    \end{align}
    if, and only if, $\rho$ satisfies the non-holonomic Herglotz variational principle (Definition \ref{dfn:noholonomicfield}).
\end{theorem}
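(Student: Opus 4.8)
The plan is to compute the first variation $\T_\rho\mathcal{A}(\xi)$ in adapted coordinates, integrate by parts to isolate an Euler--Lagrange-type bracket, and then use the non-holonomic constraint to eliminate the action-density components of the admissible variations. First I would observe that, since $\xi$ is $\pi$-vertical, its complete lift $\xi^1$ is $\bar\pi$-vertical, so $\lieD{\xi^1}$ annihilates the base form $\dd^m x$ and only acts on $L$. Using the coordinate formula for the complete lift recorded above, together with the fact that $L$ does not depend on the $z^\nu_\mu$, this gives
\begin{equation*}
\lieD{\xi^1}\mathcal{L} = (\xi^1 L)\,\dd^m x = \left( \xi^a\frac{\partial L}{\partial u^a} + (D_\mu\xi^a)\frac{\partial L}{\partial u^a_\mu} + \xi^\nu\frac{\partial L}{\partial z^\nu} \right)\dd^m x\,.
\end{equation*}
I would then rewrite the middle term as $(D_\mu\xi^a)\frac{\partial L}{\partial u^a_\mu} = D_\mu\!\left(\xi^a\frac{\partial L}{\partial u^a_\mu}\right) - \xi^a D_\mu\frac{\partial L}{\partial u^a_\mu}$ to separate off a genuine total divergence.

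Next I would pull back by $j^1\rho$ and integrate over $D$. The total-derivative property $(j^2\rho)^*(D_\mu f)=\partial_\mu\big((j^1\rho)^*f\big)$ turns the divergence term into an ordinary divergence on $D$, which by Stokes' theorem contributes only a boundary integral; this vanishes because $\xi|_{\partial D}=0$. Applying the same property to the remaining total derivative yields
\begin{equation*}
\T_\rho\mathcal{A}(\xi) = \int_D (j^1\rho)^*\left[ \xi^a\left( \frac{\partial L}{\partial u^a} - D_\mu\frac{\partial L}{\partial u^a_\mu} \right) + \xi^\nu\frac{\partial L}{\partial z^\nu} \right]\dd^m x\,.
\end{equation*}

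The decisive step is to impose the admissibility condition $\varphi(\xi^1)=0$. Since $\phi = z^\mu_\mu - L$, one has $\partial\phi/\partial u^a_\mu = -\partial L/\partial u^a_\mu$ and $\partial\phi/\partial z^\nu_\mu = \delta^\mu_\nu$, so the constraint reads $\xi^\mu = \frac{\partial L}{\partial u^a_\mu}\xi^a$ for each $\mu$. Substituting this into the variation converts the action-density term into $\xi^a\frac{\partial L}{\partial u^a_\mu}\frac{\partial L}{\partial z^\mu}$, giving
\begin{equation*}
\T_\rho\mathcal{A}(\xi) = \int_D (j^1\rho)^*\left[ \xi^a\left( \frac{\partial L}{\partial u^a} - D_\mu\frac{\partial L}{\partial u^a_\mu} + \frac{\partial L}{\partial u^a_\mu}\frac{\partial L}{\partial z^\mu} \right) \right]\dd^m x\,.
\end{equation*}
The constraint determines the components $\xi^\nu$ entirely in terms of the fibre components $\xi^a$, which remain arbitrary and may be taken compactly supported in the interior of $D$. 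By the fundamental lemma of the calculus of variations, the integral vanishes for all such $\xi^a$ if and only if the bracket vanishes along $j^1\rho$, which is exactly the Herglotz field equation \eqref{eq:herFilEq}; this establishes both implications simultaneously.

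The main obstacle is conceptual rather than computational: it lies in reading the non-holonomic condition $\varphi(\xi^1)=0$ correctly as a prescription that fixes the action-density directions of the variation while leaving the fibre directions free, so that the fundamental lemma can be applied to the $\xi^a$ alone. A secondary point requiring care is the bookkeeping between $J^1$ and $J^2$ when integrating by parts, which the total-derivative identity handles cleanly. I would also note that membership $\rho\in\Omega$ already encodes the companion constraint $D_\mu z^\mu = L$ through $(j^1\rho)^*\Phi = 0$, so no separate argument is needed for it.
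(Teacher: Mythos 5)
Your proof is correct, and its core coincides with the paper's own argument: the same computation of $\T_\rho\mathcal{A}(\xi)$ through the coordinate expression of the complete lift $\xi^1$ (with the $\partial L/\partial z^\nu_\mu$ terms dropped since $L$ is independent of $z^\nu_\mu$), the same integration by parts via the total-derivative identity, and the same use of $\xi|_{\partial D}=0$ to kill the boundary term, arriving at
\begin{equation*}
\T_\rho\mathcal{A}(\xi)=\int_D(j^1\rho)^*\left[\xi^a\left(\frac{\partial L}{\partial u^a}-D_\mu\frac{\partial L}{\partial u^a_\mu}\right)+\xi^\nu\frac{\partial L}{\partial z^\nu}\right]\dd^m x\,.
\end{equation*}
Where you genuinely diverge is the final step. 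The paper handles the admissibility condition $\varphi(\xi^1)=0$ by a multiplier argument: vanishing of the variation on the kernel of $\varphi$ forces the Euler--Lagrange covector to be a combination $\lambda_\alpha$ of the constraint gradients $\partial\phi/\partial u^a_\alpha$, $\partial\phi/\partial z^\nu_\alpha$, and the $z$-components then identify $\lambda_\mu=\partial L/\partial z^\mu$. You instead parametrize the kernel explicitly by solving the constraint, $\xi^\mu=\frac{\partial L}{\partial u^a_\mu}\xi^a$, substitute into the variation, and apply the fundamental lemma of the calculus of variations to the now-unconstrained components $\xi^a$. These are dual descriptions of the same pointwise linear algebra (a covector annihilates a subspace if and only if it lies in the span of the equations defining it), so both are valid. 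Your route is slightly more economical: it makes both directions of the equivalence immediate and avoids having to justify the existence of the multipliers $\lambda_\alpha$, whereas the paper's multiplier formulation is the one that carries over when a constraint cannot be solved explicitly for some of the variation components, and it mirrors the Lagrange-multiplier machinery reused in the vakonomic and higher-order sections.
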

\begin{proof}

\begin{align*}
 \int_D (j^1\rho)^*(\lieD{\xi^1}\mathcal{L})&=\int_D (j^1\rho)^*\left[\xi^a\frac{\partial L}{\partial u^a} +\left(\frac{\partial\xi^a}{\partial x^\mu}+u^b_\mu\frac{\partial\xi^a}{\partial u^b}+z^\tau_\mu\frac{\partial\xi^a}{\partial z^\tau}\right)\frac{\partial L}{\partial u^a_\mu}\right.
 \\&\quad\left.+\xi^\nu\frac{\partial L}{\partial z^\nu} +\left(\frac{\partial\xi^\nu}{\partial x^\mu}+u^b_\mu\frac{\partial\xi^\nu}{\partial u^b}+z^\tau_\mu\frac{\partial\xi^\nu}{\partial z^\tau}\right)\frac{\partial L}{\partial z^\nu_\mu}\right]\dd^mx
  \\
  &=\int_D (j^1\rho)^*\xi^a\frac{\partial L}{\partial u^a} +\frac{\partial\xi^a\circ\rho}{\partial x^\mu}(j^1\rho)^*\frac{\partial L}{\partial u^a_\mu}
+(j^1\rho)^*\xi^\nu\frac{\partial L}{\partial z^\nu} +\frac{\partial\xi^\nu\circ\rho}{\partial x^\mu}(j^1\rho)^*\frac{\partial L}{\partial z^\nu_\mu}\dd^mx
  \\
  &=\int_D(j^1\rho)^*\left[ \xi^a\left(\frac{\partial L}{\partial u^a} -D_\mu\frac{\partial L}{\partial u^a_\mu}\right)+\xi^\nu\frac{\partial L}{\partial z^\nu} \right]\dd^mx+\int_{\partial D}(j^1\rho)^* \xi^a\frac{\partial L}{\partial u^a_\mu}\dd^{m-1}x_\mu
  \\
  &=\int_D(j^1\rho)^*\left[ \xi^a\left(\frac{\partial L}{\partial u^a} -D_\mu\frac{\partial L}{\partial u^a_\mu}\right)+\xi^\nu\frac{\partial L}{\partial z^\nu} \right]\dd^mx\,.
\end{align*}
If it vanishes for all $\xi$ satisfying equation \eqref{eq:nonholonomic}, there exist functions $\lambda_\alpha\in \Cinfty(J^1(E\times_M\Lambda^{m-1}M))$ such that
\begin{align*}
    \frac{\partial L}{\partial u^a} -D_\mu \left( \frac{\partial L}{\partial u^a_\mu} \right) &= \lambda_\alpha\frac{\partial \phi}{\partial u^a_\alpha}  \,,
 \\
  \frac{\partial L}{\partial z^\nu}  &= \lambda_\alpha \frac{\partial \phi}{\partial z^\nu_\alpha} \,.
\end{align*}
Combining both equations and using the expression $\phi=z^\mu_\mu-L$, we see that $\lambda_\mu=\dfrac{\partial L}{\partial z^\mu}$ and 
    \begin{align}
            D_\mu \left( \frac{\partial L}{\partial u^a_\mu} \right) - \frac{\partial L}{\partial u^a}  &= \frac{\partial L}{\partial u_\mu^a}  \frac{\partial L}{\partial z^\mu}\,.
    \end{align}
\end{proof}
The Herglotz field equations \eqref{eq:herFilEq} are also called $k$-contact Euler--Lagrange equations \cite{Gaset2021}.

\subsection{Herglotz principle for fields: vakonomic version}\label{section:LagMult}

The approach presented in this section is inspired by the vakonomic version of Herglotz principle \cite{de_leon_constrained_2021}, presented in Section \ref{subs:mechanicsLagMult}. 

In the vakonomic approach we only consider variations that transform sections that satisfy the constraints into sections that also satisfy the constraints. In other words, the lift of the variations to the first jet must be tangent to the submanifold defined by the constraints. Thus, we have the following variational principle. Let $D \subseteq M$ be an oriented manifold homeomorphic to a ball and with boundary $\partial D$.

\begin{definition}\label{dfn:LagMultfield} A section
$\rho\in\Omega$ satisfies the \emph{vakonomic Herglotz variational principle} if
\begin{equation}
\T_\rho\mathcal A(\xi)=\int_D (j^1\rho)^*(\lieD{\xi^1}\mathcal{L})=0
\end{equation}
for every vector field $\xi\in \T_\rho\Gamma_D$ such that $\lieD{\xi^1}\Phi=0$.
\end{definition}

This kind of constrained field theories has been studied, for instance, in \cite{campos_vakonomic_2010}. By Theorem \ref{thm:Lagrange_multipliers}, we can rewrite this as a problem without constraints using Lagrange multipliers. We need to consider the Lagrangian
$$
\L_\lambda=\L+\lambda\Phi=\left(L+\lambda(z^\mu_\mu-L)\right)\dd^mx = L_\lambda\dd^m x\,,
$$
where $\lambda\in \Cinfty(M)$ is a function to be determined called the {\it Lagrange multiplier}. Then, the action associated to $\mathcal{L}_\lambda$ is
\begin{align*}
\mathcal{A}_\lambda:\Omega&\longrightarrow \mathbb{R}
\\
    \rho & \longmapsto\int_D (j^1\rho)^*\mathcal{L}_\lambda\,.
\end{align*}

\begin{corollary} A section
$\rho\in\Omega$ satisfies the vakonomic Herglotz variational principle if, and only if,
\begin{equation}
\T_\rho\mathcal A_\lambda(\xi)=\int_D (j^1\rho)^*(\lieD{\xi^1}\mathcal{L}_\lambda) = 0
\end{equation}
for every vector field $\xi\in \T_\rho\Gamma_D$.
\end{corollary}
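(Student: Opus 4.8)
The plan is to apply the infinite-dimensional Lagrange multiplier theorem (Theorem \ref{thm:Lagrange_multipliers}), mirroring the strategy used for mechanics in the proof of Theorem \ref{thm:Herglotz_principle_constrained}. First I would fix the ambient objects: take the smooth manifold to be $\Gamma_D(E\times_M\Lambda^{m-1}M)$, the infinite-dimensional manifold of sections over $D$, and pick as Banach target the Hilbert space $E = L^2(D\to\RR)$ with pairing $\langle\alpha,\beta\rangle=\int_D\alpha\beta\,\dd^m x$. By the Riesz representation theorem every element of $E^*$ is represented by some $\lambda\in L^2(D\to\RR)$, which — since $\lambda\in\Cinfty(M)$ is a function of the base coordinates only — is exactly the multiplier appearing in $\mathcal{L}_\lambda$.

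Next I would define the constraint map $g:\Gamma_D\to L^2(D\to\RR)$ by $g(\rho)=(j^1\rho)^*\phi$, where $\Phi=\phi\,\dd^m x$, so that $g^{-1}(0)=\Omega$. Differentiating $g$ along the flow of $\xi^1$ gives the first-order variation $\T_\rho g(\xi)=(j^1\rho)^*(\xi^1\phi)$, and the condition $\lieD{\xi^1}\Phi=0$ defining the admissible vakonomic variations in Definition \ref{dfn:LagMultfield} is precisely the tangency of $\xi$ to $\Omega$, i.e. membership in $\ker\T_\rho g$. Consequently, the critical points of $\mathcal{A}\vert_\Omega$ are exactly the sections satisfying the vakonomic Herglotz variational principle.

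Then I would invoke Theorem \ref{thm:Lagrange_multipliers} with $f=\mathcal{A}$ and $A=\Omega=g^{-1}(0)$: a section $\rho$ is a critical point of $\mathcal{A}\vert_\Omega$ if and only if there is $\widehat\lambda\in E^*$, represented by $\lambda$, such that $\rho$ is a critical point of $\mathcal{A}+\widehat\lambda\circ g$. The closing step is to identify this combined functional with $\mathcal{A}_\lambda$: since $\lambda$ depends only on base coordinates, $(\widehat\lambda\circ g)(\rho)=\int_D\lambda\,(j^1\rho)^*\phi\,\dd^m x=\int_D(j^1\rho)^*(\lambda\Phi)$, whence $\mathcal{A}+\widehat\lambda\circ g=\int_D(j^1\rho)^*(\mathcal{L}+\lambda\Phi)=\mathcal{A}_\lambda$. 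Criticality of $\mathcal{A}_\lambda$ with no constraint on the variations is the vanishing of $\T_\rho\mathcal{A}_\lambda(\xi)=\int_D(j^1\rho)^*(\lieD{\xi^1}\mathcal{L}_\lambda)$ for all $\xi\in\T_\rho\Gamma_D$, where the last expression is the first-variation formula established exactly as in the proof of Theorem \ref{thm:Herglotz_principle_fields}. This yields the stated equivalence.

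The hard part will be the functional-analytic bookkeeping needed to make Theorem \ref{thm:Lagrange_multipliers} genuinely applicable: one must choose suitable (Sobolev) completions so that $\Gamma_D$ is a Banach manifold, verify that $g$ is a smooth submersion so that $\Omega=g^{-1}(0)$ is a bona fide smooth submanifold, and handle the regularity of the multiplier, which Riesz representation only produces as an $L^2$ function whereas the statement requires $\lambda\in\Cinfty(M)$. Promoting $\lambda$ to a smooth function is a routine bootstrap once the Euler--Lagrange equations are available, but it is the one genuinely nontrivial analytic point; the remaining manipulations reduce to the same integration-by-parts computation already carried out in the non-holonomic case.
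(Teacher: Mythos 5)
Your proposal is correct and follows essentially the same route as the paper, which presents this corollary as an immediate application of the Lagrange multiplier theorem (Theorem \ref{thm:Lagrange_multipliers}) in direct parallel with the mechanics case of Section \ref{subs:mechanicsLagMult}, identifying the augmented functional $\mathcal{A}+\widehat\lambda\circ g$ with $\mathcal{A}_\lambda$. Your explicit treatment of the analytic caveats (Sobolev completions, submersion property, smoothness of $\lambda$) goes beyond what the paper records, but does not change the argument.
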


The corresponding equations are given by the following theorem.

\begin{theorem}\label{thm:Herglotz_principle_fields-LagMult}
    Let $\mathcal{L}: J^1(E\times_M\Lambda^{m-1}M) \to \Lambda^m M$ be a Lagrangian density  and let $\rho\in \Omega$. Then, $j^1\rho$ satisfies the Herglotz field equations:
   \begin{align}  \label{eq:LagMulHer}
            D_\mu \left( \frac{\partial L}{\partial u^a_\mu} \right) - \frac{\partial L}{\partial u^a}  &= \frac{\partial L}{\partial u_\mu^a}  \frac{\partial L}{\partial z^\mu}\,,
    \end{align}
   and the condition
      \begin{align}\label{condcerrada}
            D_\nu\frac{\partial L}{\partial z^\mu}  &=
            D_\mu\frac{\partial L}{\partial z^\nu}\,,
    \end{align}
    if, and only if, $\rho$ satisfies the vakonomic Herglotz variational principle.
\end{theorem}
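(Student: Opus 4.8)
The plan is to turn the vakonomic principle into an unconstrained Euler--Lagrange problem via the corollary preceding the statement, and then eliminate the multiplier $\lambda$, showing that exactly two conditions on $\rho$ survive: the Herglotz equations \eqref{eq:LagMulHer} and the closedness condition \eqref{condcerrada}.

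First I would compute the variation $\T_\rho\mathcal A_\lambda(\xi)=\int_D(j^1\rho)^*(\lieD{\xi^1}\mathcal{L}_\lambda)$ following verbatim the proof of \cref{thm:Herglotz_principle_fields}: insert the explicit complete lift $\xi^1$, integrate by parts, and discard the boundary terms since $\xi|_{\partial D}=0$. The only new feature is that $L_\lambda=(1-\lambda)L+\lambda\,z^\mu_\mu$ now depends on the jet coordinates $z^\nu_\mu$, with $\partial L_\lambda/\partial z^\nu_\mu=\lambda\,\delta^\mu_\nu$. Requiring the variation to vanish for every $\xi\in\T_\rho\Gamma_D$ then yields the full Euler--Lagrange system for $L_\lambda$, which splits into the $u^a$-equations
\begin{equation*}
(1-\lambda)\frac{\partial L}{\partial u^a}-D_\mu\!\left((1-\lambda)\frac{\partial L}{\partial u^a_\mu}\right)=0
\end{equation*}
and, using $D_\mu(\lambda\,\delta^\mu_\nu)=D_\nu\lambda$, the $z^\nu$-equations
\begin{equation*}
(1-\lambda)\frac{\partial L}{\partial z^\nu}-D_\nu\lambda=0 .
\end{equation*}

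Next I would read the $z^\nu$-equations as a first-order system determining the multiplier, namely $D_\nu\lambda=(1-\lambda)\,\partial L/\partial z^\nu$. Applying $D_\mu$ and pulling back by $j^2\rho$ — so that, by the identity $(j^2\rho)^*(D_\mu f)=\partial_\mu(j^1\rho)^*f$, the total derivatives become ordinary partial derivatives that commute — and antisymmetrising in $\mu,\nu$, the second derivatives of $\lambda$ drop out and the quadratic terms $\partial_z L\,\partial_z L$ cancel by symmetry, leaving $(1-\lambda)\big(D_\mu\,\partial L/\partial z^\nu-D_\nu\,\partial L/\partial z^\mu\big)=0$; dividing by $1-\lambda$ gives precisely \eqref{condcerrada}. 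Expanding the $u^a$-equations, substituting $D_\mu\lambda=(1-\lambda)\,\partial L/\partial z^\mu$ into the term produced by differentiating $1-\lambda$, and again dividing by $1-\lambda$, produces the Herglotz field equations \eqref{eq:LagMulHer}. For the converse, given \eqref{eq:LagMulHer} and \eqref{condcerrada}, the closedness of the form $(j^1\rho)^*(\partial L/\partial z^\mu)\,\dd x^\mu$ together with $D$ being a ball lets me invoke the Poincaré lemma to solve $D_\nu\lambda=(1-\lambda)\,\partial L/\partial z^\nu$ for a $\lambda$ with $1-\lambda$ nowhere vanishing; retracing the computation shows both Euler--Lagrange equations for $L_\lambda$ then hold, so $\rho$ is critical for $\mathcal A_\lambda$ and hence satisfies the vakonomic principle.

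The main obstacle is the repeated division by $1-\lambda$: the argument is meaningful only when the multiplier is not identically $1$, since $\lambda\equiv 1$ collapses $\mathcal{L}_\lambda$ to the total divergence $z^\mu_\mu\,\dd^m x$ and renders the principle degenerate. I expect the delicate points to be justifying that the relevant multiplier can be chosen with $1-\lambda$ nowhere zero, and recognising that the integrability of the multiplier equation is a genuine extra constraint rather than an automatic identity. It is exactly this obstruction — absent in the one-dimensional mechanical case of \cref{thm:Herglotz_principle_constrained}, where there are no mixed partial derivatives to commute — that forces the new condition \eqref{condcerrada} and thereby separates the vakonomic formulation from the non-holonomic one of \cref{thm:Herglotz_principle_fields}. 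The contractibility hypothesis on $D$ is used only in the converse, to integrate the closed form and produce $\lambda$.
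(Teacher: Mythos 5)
Your proposal is correct and follows essentially the same route as the paper's own proof: pass to the unconstrained Euler--Lagrange system for $L_\lambda$ via the multiplier corollary, read the $z^\nu$-equations as $D_\nu\lambda=(1-\lambda)\,\partial L/\partial z^\nu$, divide by $1-\lambda$ to obtain \eqref{eq:LagMulHer}, extract \eqref{condcerrada} as the integrability condition on the multiplier, and use contractibility of $D$ (Poincar\'e lemma) to construct $\lambda$ in the converse. The only differences are presentational---you derive \eqref{condcerrada} by antisymmetrised cross-differentiation of the multiplier equation, whereas the paper sets $g=\log\abs{1-\lambda}$ and requires $\pm\,\partial L/\partial z^\nu\,\dd x^\nu$ to be exact---and the division-by-$(1-\lambda)$ caveat you rightly flag is left equally unresolved in the paper, whose proof simply assumes $\lambda\neq 1$.
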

\begin{proof}

The problem is the usual non-constrained Hamilton variational problem for the Lagrangian $L_\lambda$. Considering variations with respect to $\delta u^a$ and $\delta z^\nu$ we obtain the set of equations
\begin{align}\label{eq:LagMult1}
    \frac{\partial L_\lambda}{\partial u^a} -D_\mu \left( \frac{\partial L_\lambda}{\partial u^a_\mu} \right) &=0  \,,
 \\\label{eq:LagMult2}
    \frac{\partial L_\lambda}{\partial z^\nu} -D_\mu \left( \frac{\partial L_\lambda}{\partial z^\nu_\mu} \right) &=0  \,.
\end{align}
These equations are just the Euler--Lagrange equations when considering $u^a$ and $z^\nu$ as dynamical variables. Expanding equation \eqref{eq:LagMult2}, we have
$$
(1-\lambda)\frac{\partial L}{\partial z^\nu} -\frac{\partial \lambda}{\partial x^\mu}=0\,,
$$
and combining it with equation \eqref{eq:LagMult1}, we find that
$$
0=(1-\lambda)\frac{\partial L}{\partial u^a} -D_\mu \left((1-\lambda) \frac{\partial L}{\partial u^a_\mu}\right)=(1-\lambda)\frac{\partial L}{\partial u^a} -(1-\lambda)D_\mu \left( \frac{\partial L}{\partial u^a_\mu}\right) +(1-\lambda)\frac{\partial L}{\partial z^\nu} \frac{\partial L}{\partial u^a_\mu}\,.
$$
If $\lambda\neq1$, we can divide by $1-\lambda$ and obtain equation \eqref{eq:LagMulHer}. However, in this case there are hidden conditions in equation \eqref{eq:LagMult2}. Taking $g  = \log(\abs{1-\lambda})$, equation \eqref{eq:LagMult2} implies
\begin{equation*}
        \dd g  = \pm \frac{\partial L}{\partial z^\nu}\dd x^\nu\,.
\end{equation*}
This has solution if and only if the right hand side is closed, namely if
\begin{equation}
        D_\nu\frac{\partial L}{\partial z^\mu}  =
        D_\mu\frac{\partial L}{\partial z^\nu}\,.\label{eq:z_var}
\end{equation}
If this condition is fulfilled, since $D$ is homeomorphic to a ball,
\begin{equation}
    \frac{\partial L}{\partial z^\nu}\dd x^\nu = \dd h\,,
\end{equation}
and so we pick $g = h$.
\end{proof}

\subsection{Relations between both approaches}
The main difference between the non-holonomic and the vakonomic approaches is the unexpected condition \eqref{condcerrada}. It motivates the following definition.
\begin{definition}\label{dfn:ClosedLag}
A Lagrangian has \emph{closed action dependence} if
\begin{equation}\label{eq:closed_action}
    D_\mu\frac{\partial L}{\partial z^\nu}= D_\nu\frac{\partial L}{\partial z^\mu}
\end{equation}
for any pair $1\leq\mu,\nu\leq m$.
\end{definition}

This condition has two interesting interpretations: a variational one and a geometric one. The Lagrangian has closed action dependence if, and only if, the action of $\rho=(\sigma,\zeta) \in \Omega$ only depends on $\sigma$. Equation \eqref{eq:z_var} is obtained by taking variations of the constrained action in  the ${\zeta}$ direction. Indeed, a Lagrangian has closed action dependence if, and only if, for any section $\sigma: M \to E$, does not exist a family of sections $\zeta_s : M \to \Lambda^{m-1} (M)$, $s \in \mathbb{R}$, such that
$$ \left.\dfrac{\dd \zeta_s}{\dd s}\right|_{\partial D}=0$$
and satisfying the conditions
$$ \dd \zeta_s = \mathcal{L}(j^1 \sigma, \zeta_s)\quad\text{and}\quad\dfrac{\partial \mathcal{A(\sigma, \zeta_s)}}{\partial s}\bigg|_{s=0} \neq 0\,.$$
The reason is because, if $\frac{\partial L}{\partial z^\nu}$ induces a closed form, by Stokes' theorem the action only depends on the border, where the variation vanishes. This can be seen explicitly in the example presented in Section \ref{ex:contraejemplo}.

The geometric interpretation can be obtained as follows. Let $L: J^1 (E \times_M \Lambda^{m-1}M) \to \mathbb{R}$ be a Lagrangian function. Define the $M$-semibasic one-form $\theta_{L} \in \Omega^1(J^1( E \times_M \Lambda^{m-1}M)) $ as 
\begin{equation}
    \theta_{L} = \frac{\partial L}{\partial z^\mu} \dd x^\mu,
\end{equation}
which is independent on the coordinates used to define it. The closed action dependence condition is equivalent to
\begin{equation}
    \dd \theta_{L} = 0.
\end{equation}
The form $\theta_L$ is (minus) the dissipation form introduced in \cite{de_leon_multicontact_2022}.

For Lagrangians with closed action dependence, both versions of the variational principle given in Definitions \ref{dfn:noholonomicfield} and \ref{dfn:LagMultfield} are equivalent. Moreover, they coincide with the version proposed in \cite{Lazo2018} and the equations are the same as the ones derived from the $k$-contact \cite{Gaset2021} and multicontact \cite{de_leon_multicontact_2022} formalisms.

When the Lagrangian has not closed action dependence, both principles may be different. In Section \ref{ex:contraejemplo}, we provide an example where there are sections which are solutions of one variational principle but not the other. In this case, only the non-holonomic approach provides, in general, the same equations as the $k$-contact and multicontact formalisms.

\section{Higher-order Lagrangian densities}\label{sec:4}

Most of the relevant field theories are modelled by first-order Lagrangians with one notable exception, General Relativity, which is usually described with a second-order Lagrangian. Contact gravity is specially interesting as an example of modified gravity which may explain certain observations about the expansion of the universe \cite{paiva_generalized_2022}. The Herglotz field equations for the Hilbert Einstein Lagrangian with a linear term in the action have been derived in \cite{paiva_generalized_2022} and \cite{gaset_EH_2022} with slightly different variational methods. The method used in  \cite{gaset_EH_2022} is, essentially, the vakonomic method presented in Section \ref{section:LagMult}, showing how it can be expanded to higher-order Lagrangians. Hence, in this section we apply the vakonomic principle to higher-order Lagrangian densities. 

Consider the $r$-th jet bundle $J^r(E \times_M \Lambda^{m-1}M)$ of a fiber bundle $E \to M$. Local coordinates of $J^r (E \times_M \Lambda^{m-1}M)$ will be denoted as $(x^\mu, u^a_I)$, where $I = (I_1, \ldots, I_m)$ is a multi-index such that $0 \leq \abs{I} = I_1 + \ldots + I_n \leq r$. Given a local section $\sigma:M \to E$, we denote by $j^r \sigma: M \to J^r E$ its $r$-th prolongation.

Given a coordinate system, the total derivative $D_\mu:\Cinfty(J^k(E \times_M \Lambda^{m-1}M))\rightarrow \Cinfty(J^{k+1}(E \times_M \Lambda^{m-1}M))$, for $\mu=1,\dots,m$, is a derivation given by
\begin{align*}
D_\mu f=\frac{\partial f}{\partial x^\mu}+\sum_{|J|=0}^k\left(u^a_{J+1_\mu}\frac{\partial f}{\partial u^a_J}+z^\nu_{J+1_\mu}\frac{\partial f}{\partial z^\nu_J}\right)\,,
\end{align*}
where $f\in \Cinfty(J^k(E \times_M \Lambda^{m-1}M))$.

The Lagrangian density $\mathcal{L}: J^r (E \times_M \Lambda^{m-1}M) \to \Lambda^m M$ is a fiber bundle morphism over $M$. Locally, $\mathcal{L}=L\dd^mx$. The Herglotz operator \cite{DeLeo2021} can be extended to fields.

\begin{definition}
Given a Lagrangian $\mathcal{L}$ and an index $1\leq\mu\leq m$, the \emph{Herglotz operator} for fields is the linear operator
\begin{align*}
    D^{\mathcal{L}}_\mu: \Cinfty(J^r( E \otimes_M \Lambda^{m-1} M))&\longrightarrow \Cinfty (J^{r+1} (E \otimes_M \Lambda^{m-1} M))
    \\
    F&\longmapsto D^{\mathcal{L}}_\mu (F)=D_\mu F - F \frac{\partial L}{\partial z^\mu}\,.
\end{align*}
\end{definition}

In general, these operators are not derivations and, since
\begin{equation*}
    \left(D^{\mathcal{L}}_\mu D^{\mathcal{L}}_\nu-D^{\mathcal{L}}_\nu D^{\mathcal{L}}_\mu\right) F=\left(D_\nu \frac{\partial L}{\partial z^\mu}-D_\mu\frac{\partial L}{\partial z^\nu}\right)F\,,
\end{equation*}
they do not commute.
\begin{lemma}
The Herglotz operators commute if, and only if, the Lagrangian has closed action dependence.
\end{lemma}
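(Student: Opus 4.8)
The plan is to read the equivalence directly off the commutator identity displayed immediately before the statement, which expresses the commutator of the two Herglotz operators as a multiplication operator:
\begin{equation*}
    \left(D^{\mathcal{L}}_\mu D^{\mathcal{L}}_\nu-D^{\mathcal{L}}_\nu D^{\mathcal{L}}_\mu\right) F=\left(D_\nu \frac{\partial L}{\partial z^\mu}-D_\mu\frac{\partial L}{\partial z^\nu}\right)F\,.
\end{equation*}
Since the right-hand side is just $F$ multiplied by the fixed function $c_{\mu\nu} := D_\nu \frac{\partial L}{\partial z^\mu}-D_\mu\frac{\partial L}{\partial z^\nu}$, the whole argument reduces to the observation that a multiplication operator vanishes exactly when its multiplier is the zero function.

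For the \enquote{if} direction I would assume the Lagrangian has closed action dependence, i.e. $D_\mu\frac{\partial L}{\partial z^\nu}= D_\nu\frac{\partial L}{\partial z^\mu}$ for every pair $\mu,\nu$ (Definition \ref{dfn:ClosedLag}). This is precisely the statement $c_{\mu\nu}=0$, so the right-hand side of the commutator identity vanishes for every $F$ and every $\mu,\nu$; hence $D^{\mathcal{L}}_\mu$ and $D^{\mathcal{L}}_\nu$ commute. For the \enquote{only if} direction I would assume the operators commute, so that the left-hand side vanishes for every $F\in\Cinfty(J^r(E\times_M\Lambda^{m-1}M))$. Evaluating the commutator identity on the constant function $F=1$ then gives $c_{\mu\nu}=0$, which is exactly the closed action dependence condition. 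Any nowhere-vanishing $F$ would serve equally well; the point is only that a multiplication operator annihilating some nonvanishing function must have zero multiplier.

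There is essentially no obstacle here, since the commutator formula does all the work; the only thing to watch is not to misread closed action dependence, whose index convention matches $c_{\mu\nu}$ up to the harmless swap $\mu\leftrightarrow\nu$. If I wanted the proof to be self-contained I would first re-derive the commutator identity by a direct computation, expanding $D^{\mathcal{L}}_\mu D^{\mathcal{L}}_\nu F = D_\mu\!\left(D_\nu F - F\frac{\partial L}{\partial z^\nu}\right) - \left(D_\nu F - F\frac{\partial L}{\partial z^\nu}\right)\frac{\partial L}{\partial z^\mu}$ together with its $\mu\leftrightarrow\nu$ counterpart and subtracting, whereupon the second-order terms cancel by symmetry and the single-derivative terms cancel by the Leibniz rule for the derivation $D_\mu$, leaving only $\left(D_\nu \frac{\partial L}{\partial z^\mu}-D_\mu\frac{\partial L}{\partial z^\nu}\right)F$. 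Since this identity is already established in the text, I would simply cite it.
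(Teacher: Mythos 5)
Your proof is correct and follows exactly the paper's implicit argument: the lemma is stated immediately after the displayed commutator identity, and the paper treats it as an immediate consequence, just as you do. Your additional details---evaluating on $F=1$ for the ``only if'' direction and sketching the re-derivation of the identity via the Leibniz rule and the commutativity of total derivatives---are sound and merely make explicit what the paper leaves to the reader.
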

For Lagrangians with closed action dependence, we can denote the successive applications of the Herglotz operator with multi-index notation as

\begin{equation*}
    D^{\mathcal{L}}_I =\prod_{\mu=1}^m \left(D^{\mathcal{L}}_\mu\right)^{I_\mu}\,.
\end{equation*}
The constraint is implemented as in the first-order case, that is
\begin{equation}
    \Phi=(\tau^r_1)^*\overline{DS}-\mathcal{L}=0\,,
\end{equation}
where $(\tau^r_1):J^r(E\times_M\Lambda^{m-1}M)\rightarrow J^1\Lambda^{m-1}M$ is the projection. In local coordinates, $\Phi=\phi\dd^mx$, with $\phi=z^\mu_\mu-L$. 
Let $D \subseteq M$ be an oriented manifold homeomorphic to a ball and with boundary $\partial D$. The set of sections on $D$ which satisfy the constraint is denoted by
\begin{equation*}
 \Omega= \set{\rho \in \Gamma_D(E \times_M \Lambda^{m-1} M)\, \text{ such that }\, (j^r\rho)^*\Phi=0} .
\end{equation*}

In the following definition we introduce the higher-order version of the vakonomic variational principle presented in Definition \ref{dfn:LagMultfield}.

\begin{definition}\label{dfn:LagMultfieldHO}
    A section $\rho\in\Omega$ satisfies the \emph{higher-order vakonomic Herglotz variational principle} if
    \begin{equation}
        \T_\rho\mathcal A(\xi)=\int_D (j^r\rho)^*(\lieD{\xi^r}\mathcal{L})=0\,,
    \end{equation}
    for every vector field $\xi\in \T_\rho\Gamma_D$ such that $\lieD{\xi^r}\Phi=0$.
\end{definition}

As before, we have an equivalent version of this variational principle based on Lagrange multipliers \cite{campos_vakonomic_2010}. Consider the modified Lagrangian
$$
\L_\lambda=\L+\lambda\Phi=\left(L+\lambda(z^\mu_\mu-L)\right)\dd^mx = L_\lambda\dd^m x\,.
$$
Then, the action associated to $\mathcal{L}_\lambda$ is
\begin{align*}
\mathcal{A}_\lambda:\Omega & \longrightarrow \mathbb{R}
\\
    \rho & \longmapsto\int_D (j^1\rho)^*\mathcal{L}_\lambda\,.
\end{align*}

\begin{corollary}
    A section $\rho\in\Omega$ satisfies the higher-order vakonomic Herglotz variational principle if, and only if,
    \begin{equation}
        \int_D (j^r\rho)^*(\lieD{\xi^r}\mathcal{L}_\lambda)=0\,,
    \end{equation}
    for every vector field $\xi\in \T_\rho\Gamma_D$.
\end{corollary}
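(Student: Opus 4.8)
The plan is to mirror the first-order argument following Definition~\ref{dfn:LagMultfield}: the statement is precisely the instance of the infinite-dimensional Lagrange multiplier Theorem~\ref{thm:Lagrange_multipliers} adapted to $r$-th order jets. I would take $M = \Gamma_D(E \times_M \Lambda^{m-1}M)$ as the (infinite-dimensional) manifold of sections, a suitable Banach space $E$ of square-integrable densities on $D$ as target, and the constraint map $g\colon \rho \mapsto (j^r\rho)^*\Phi$, so that $\Omega = g^{-1}(0)$ and $\mathcal{A}$ plays the role of the function $f$ of the theorem. The modified action is $\mathcal{A}_\lambda = \mathcal{A} + \widehat\lambda \circ g$, with $\widehat\lambda$ represented (via Riesz) by $\lambda \in \Cinfty(M)$, and one checks directly that $\mathcal{A}_\lambda(\rho) = \int_D (j^r\rho)^*\mathcal{L}_\lambda$ with $\mathcal{L}_\lambda = \mathcal{L} + \lambda\Phi$, matching Definition~\ref{dfn:LagMultfieldHO}.

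The key computation is the linearization of $g$ at $\rho$. Differentiating along a variation $\xi \in \T_\rho\Gamma_D$ gives $\T_\rho g(\xi) = (j^r\rho)^*(\lieD{\xi^r}\Phi)$, so that the variations appearing in the vakonomic principle, namely those with $\lieD{\xi^r}\Phi = 0$, are exactly the tangent vectors to $\Omega$ at $\rho$. Thus the assertion that $\rho$ satisfies the higher-order vakonomic Herglotz variational principle is literally the statement that $\rho$ is a critical point of $\mathcal{A}\vert_\Omega$.

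With this dictionary in place, both implications follow from Theorem~\ref{thm:Lagrange_multipliers}. For the forward direction, if $\rho$ is critical for $\mathcal{A}\vert_\Omega$ the theorem produces a multiplier making $\rho$ critical for $\mathcal{A}_\lambda$, i.e.\ $\int_D (j^r\rho)^*(\lieD{\xi^r}\mathcal{L}_\lambda) = 0$ for \emph{all} $\xi \in \T_\rho\Gamma_D$. Conversely, if this holds for all $\xi$, then on constraint-preserving variations the multiplier term $\widehat\lambda(\T_\rho g(\xi)) = \int_D (j^r\rho)^*(\lambda\,\lieD{\xi^r}\Phi)$ vanishes, leaving $\T_\rho\mathcal{A}(\xi) = 0$, which is the vakonomic principle. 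The only change relative to the first-order Corollary is the replacement of $j^1$ by $j^r$ and of the complete lift $\xi^1$ by $\xi^r$; the algebraic skeleton of the argument is unchanged.

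The main obstacle is analytic rather than formal: to invoke Theorem~\ref{thm:Lagrange_multipliers} one must verify that $g$ is a genuine smooth submersion onto the chosen Banach space, so that $\Omega$ is an honest Banach submanifold and the tangent-space identification above is legitimate. For higher-order densities this forces a careful choice of compatible function spaces (e.g.\ Sobolev completions absorbing the $r$ derivatives carried by $j^r\rho$) together with a check of the surjectivity of $\T_\rho g$. A second, related point is that the multiplier is a priori only an element of the dual $E^*$; recovering an honest smooth function $\lambda \in \Cinfty(M)$ (as used in $\mathcal{L}_\lambda$) rests on the Riesz representation supplemented by a regularity/du Bois-Reymond-type argument, exactly as in the mechanical case underlying Theorem~\ref{thm:Herglotz_principle_constrained}.
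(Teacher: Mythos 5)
Your proposal is correct and takes essentially the same route as the paper: the paper states this corollary without a written proof, presenting it as an instance of the Lagrange multiplier Theorem~\ref{thm:Lagrange_multipliers} applied verbatim as in the mechanical case (Theorem~\ref{thm:Herglotz_principle_constrained}) and its first-order field analogue, which is exactly the dictionary you construct (constraint map $\rho \mapsto (j^r\rho)^*\Phi$, linearization $(j^r\rho)^*(\lieD{\xi^r}\Phi)$, multiplier represented by $\lambda$, so that $\mathcal{A}_\lambda = \mathcal{A} + \widehat{\lambda}\circ g$ has integrand $\mathcal{L}_\lambda = \mathcal{L} + \lambda\Phi$). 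The analytic caveats you flag (verifying that $g$ is a submersion onto a suitable Banach space and recovering a smooth $\lambda$ from the dual element) are likewise left implicit in the paper, so your treatment is, if anything, slightly more explicit about the gaps.
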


\begin{theorem}\label{thm:Herglotz_principle_fields-higher-order} Let $\mathcal{L}: J^r( E \otimes_M \Lambda^{m-1} M) \to \Lambda^m M$ be a Lagrangian density  and let $\rho\in \Omega$. Then, $j^r\rho$ satisfies the higher-order Herglotz field equations
    \begin{align*}  
           \sum_I {(-1)}^{\abs{I}} D_I^\mathcal{L} \left(
        \frac{\partial L}{\partial u^a_I}  \right)  &= 0
    \end{align*}
    and the condition
        \begin{align*}  
            D_\nu\frac{\partial L}{\partial z^\mu}  =
            D_\mu\frac{\partial L}{\partial z^\nu} \,,
    \end{align*}
    if, and only, if $\rho$ satisfies the higher-order vakonomic Herglotz variational principle.
\end{theorem}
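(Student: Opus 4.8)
The plan is to follow the template of the first-order proof of Theorem~\ref{thm:Herglotz_principle_fields-LagMult}, replacing its first-order Euler--Lagrange equations by their higher-order analogues and isolating the one genuinely new ingredient: an operator intertwining identity that upgrades the single product-rule step of the first-order argument to arbitrary multi-indices. First I would use the Corollary preceding the statement to reduce the constrained vakonomic principle to the unconstrained variational problem for $\mathcal{L}_\lambda=\mathcal{L}+\lambda\Phi$, so that $\rho$ satisfies the principle if and only if $j^r\rho$ solves the higher-order Euler--Lagrange equations of $L_\lambda=(1-\lambda)L+\lambda z^\mu_\mu$ in both families of dynamical fields $u^a$ and $z^\nu$. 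Since $z^\mu_\mu$ does not depend on the $u^a_I$, one has $\partial L_\lambda/\partial u^a_I=(1-\lambda)\,\partial L/\partial u^a_I$, while the $z$-dependence of $L_\lambda$ is confined to the value $z^\nu$ (through $L$) and to the single derivative $z^\mu_\mu$ (through the constraint term), exactly as in the first-order case.

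Next I would extract the $z^\nu$-equations. Because $\partial L_\lambda/\partial z^\nu=(1-\lambda)\,\partial L/\partial z^\nu$, $\partial L_\lambda/\partial z^\nu_\mu=\lambda\,\delta^\mu_\nu$, and all higher $z$-derivatives of $L_\lambda$ vanish, the higher-order Euler--Lagrange equation for $z^\nu$ collapses to the first-order relation $(1-\lambda)\,\partial L/\partial z^\nu=\partial\lambda/\partial x^\nu$, equivalently $D_\nu(1-\lambda)=-(1-\lambda)\,\partial L/\partial z^\nu$ (recall $\lambda\in\Cinfty(M)$, so $D_\nu\lambda=\partial_\nu\lambda$). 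On the branch $\lambda\neq1$ this is the defining relation for the multiplier; writing $g=\log\abs{1-\lambda}$ shows that a solution exists precisely when $(\partial L/\partial z^\nu)\,\dd x^\nu$ is closed along $j^r\rho$, which is the closed action dependence condition $D_\nu\,\partial L/\partial z^\mu=D_\mu\,\partial L/\partial z^\nu$. Since $D$ is homeomorphic to a ball this closedness is also sufficient, recovering $\lambda$ from a potential $h$ via $1-\lambda=e^{-h}$. This single relation thus produces, in both logical directions, the extra condition of the theorem.

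The heart of the argument is then the intertwining identity
\begin{equation*}
    D_\mu\big((1-\lambda)F\big)=(1-\lambda)\left(D_\mu F-F\,\frac{\partial L}{\partial z^\mu}\right)=(1-\lambda)\,D^{\mathcal{L}}_\mu F,
\end{equation*}
an immediate consequence of the Leibniz rule for $D_\mu$ together with the $z$-equation $D_\mu(1-\lambda)=-(1-\lambda)\,\partial L/\partial z^\mu$. I would iterate it to obtain $D_I\big((1-\lambda)F\big)=(1-\lambda)\,D^{\mathcal{L}}_I F$ for every multi-index $I$, where $D^{\mathcal{L}}_I=\prod_\mu(D^{\mathcal{L}}_\mu)^{I_\mu}$. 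Because the total derivatives $D_\mu$ commute, this identity simultaneously forces the operators $D^{\mathcal{L}}_\mu$ to commute, so $D^{\mathcal{L}}_I$ is unambiguous; by the Lemma relating commutativity of the Herglotz operators to closed action dependence, this is consistent with the condition already derived. Feeding $F=\partial L/\partial u^a_I$ and summing, the $u^a$ Euler--Lagrange equation $\sum_I(-1)^{\abs{I}}D_I\big((1-\lambda)\,\partial L/\partial u^a_I\big)=0$ becomes $(1-\lambda)\sum_I(-1)^{\abs{I}}D^{\mathcal{L}}_I(\partial L/\partial u^a_I)=0$, and dividing by $1-\lambda$ yields the higher-order Herglotz field equations. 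Reading every step backwards — construct $\lambda$ from the potential $h$ furnished by the closed action condition, multiply the Herglotz equations by $1-\lambda$, and apply the intertwining identity in reverse — recovers the full Euler--Lagrange system for $L_\lambda$ and hence, via the Corollary, the vakonomic principle.

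I expect the main obstacle to be the bookkeeping in the iterated intertwining identity: one must verify that the naive composition $D^{\mathcal{L}}_I=\prod_\mu(D^{\mathcal{L}}_\mu)^{I_\mu}$ appearing in the statement is precisely the operator produced by peeling off the factors $(1-\lambda)$ one total derivative at a time, and that its well-definedness (independence of the order of the factors) is not an extra hypothesis but is automatically supplied by the closed action dependence that the $z$-equation already imposes. A secondary point requiring care is the degenerate branch $\lambda\equiv1$, where $L_\lambda$ reduces to the null Lagrangian $z^\mu_\mu$ and the division is illegitimate; as in the first-order proof this branch is excluded, so that the equivalence is understood on the complement $\lambda\neq1$.
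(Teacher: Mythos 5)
Your proposal is correct and follows essentially the same route as the paper's proof: reduction to the unconstrained problem for $L_\lambda$ via the Lagrange multiplier corollary, extraction of the multiplier equation $(1-\lambda)\,\partial L/\partial z^\nu=\partial\lambda/\partial x^\nu$ from the $z$-variations (whose solvability on a ball is exactly the closed action dependence condition), and conversion of the $u$-equation into Herglotz form by peeling off the factor $(1-\lambda)$ one total derivative at a time. Your intertwining identity $D_\mu\bigl((1-\lambda)F\bigr)=(1-\lambda)\,D^{\mathcal{L}}_\mu F$ is in fact the corrected form of the step the paper prints (with typographical slips) as $D_\mu\bigl((1-\lambda)F\bigr)=-D_\mu(\lambda F)=-\lambda D^{\mathcal{L}}_\mu F$, and your observation that commutativity of the $D^{\mathcal{L}}_\mu$ is supplied automatically by the already-derived closedness condition matches the role of the paper's preceding Lemma.
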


\begin{proof}
    We proceed in a similar way to the first-order case.
The Euler--Lagrange equations of $\mathcal{L}_\lambda$ are given by
    \begin{align}\label{eq:euler_langrange_contact_fields_q_ho}
        \sum_I {(-1)}^{\abs{I}} D_I \left(\lambda  
        \frac{\partial L_\lambda}{\partial u^a_I}  \right)  &= 0\,, \\\label{eq:euler_langrange_contact_fields_z_ho}
    \lambda 
        \frac{\partial  L_\lambda}{\partial z^\nu}  -D_\mu\left(\lambda 
        \frac{\partial  L_\lambda}{\partial z^\nu_\mu}  \right) &= 0\,.
     \end{align}
Since $L_\lambda$ does only depend of $\zeta$ and its first derivatives, higher-order terms in equation \eqref{eq:euler_langrange_contact_fields_z_ho} vanish. Taking into account the definition of $L_\lambda$, we have
\begin{equation}\label{eq:lambda_fields_ho}
(1-\lambda)\frac{\partial L}{\partial z^\nu} -\frac{\partial \lambda}{\partial x^\mu}=0\,.
\end{equation}

Repeating the argument used in the first-order case, this has solution $\lambda(x^\mu)$ if and only if
\begin{equation*}
        D_\nu\frac{\partial L}{\partial z^\mu}  =
        D_\mu\frac{\partial L}{\partial z^\nu}\,.
\end{equation*}
That is, there only exist solutions where $\mathcal{L}$ has closed action dependence. Hence,
by equation~\eqref{eq:lambda_fields_ho}, we see that, for any function $F$,
\begin{equation*}
    D_\mu\left( (1-\lambda) F\right)=-D_\mu\left( \lambda F\right) =- \lambda D^{\mathcal{L}}_\mu F\,.
\end{equation*}
Substituting the above expression in ~\eqref{eq:euler_langrange_contact_fields_q_ho}, we obtain the higher-order Herglotz field equations.
\end{proof}

These equations are compatible with the ones derived in \cite{gaset_EH_2022} for the Hilbert--Einstein Lagrangian.

\section{Examples}\label{sec:5}

\subsection{Vibrating string with damping}

In this example we are going to study how we can derive the equation of a vibrating string with damping from a Herglotz principle. It is well known that a vibrating string can be described using the Lagrangian formalism. Consider the coordinates $(t,x)$ for the time and the space. Denote by $u$ the separation of a point in the string from its equilibrium point, and hence $u_t$ and $u_x$ will denote the derivative of $u$ with respect to the two independent variables. The Lagrangian function for this system is
\begin{equation}\label{eq:Lagrangian-vibrating-string}
    L_0(u,u_t,u_x) = \frac{1}{2}\rho u_t^2 - \frac{1}{2}\tau u_x^2\,,
\end{equation}
where $\rho$ is the linear mass density of the string and $\tau$ is the tension of the string. We will assume that these quantities are constant. The Euler--Lagrange equation for this Lagrangian function is
$$ u_{tt} = c^2 u_{xx}\,, $$
where $c^2 = \dfrac{\tau}{\rho}$.

In order to model a vibrating string with linear damping, we can modify the Lagrangian function \eqref{eq:Lagrangian-vibrating-string} so that it becomes a $k$-contact Lagrangian \cite{Gaset2021}.

The new Lagrangian function $L$ is defined in the phase bundle $\oplus^2\T Q\times \RR^2$, equipped with adapted coordinates $(u; u_t, u_x; z^t, z^x)$, and is given by
$$ L(u, u_t, u_x, z^t, z^x) = L_0 - \gamma z^t = \frac{1}{2}\rho u_t^2 - \frac{1}{2}\tau u_x^2 - \gamma z^t\,, $$
where $\gamma\in\RR$ is a constant accounting for the damping.

The Herglotz equation \eqref{eq:Herglotz-equations} for this Lagrangian $L$ reads
$$ u_{tt} = c^2 u_{xx} - \gamma u_t\,, $$
which is the equation of a vibrating string with damping. The additional equation \eqref{condcerrada},
$$
    D_\nu\frac{\partial L}{\partial z^\mu} = D_\mu\frac{\partial L}{\partial z^\nu}\Leftrightarrow \begin{dcases}
        -\partial_t\gamma = 0\,,\\
        -\partial_x\gamma = 0\,,
    \end{dcases}
$$
is trivially satisfied since $\gamma$ is constant, and hence the equations obtained are exactly the same as in the $k$-contact Lagrangian formalism introduced in \cite{Gaset2021}. The next example presents a case in which both approaches are not fully equivalent.

\subsection{The non-holonomic and the vakonomic principles are not equivalent}\label{ex:contraejemplo}

Consider the Lagrangian 
$$L(t,x,u,u_t,u_x,z^t,z^x)=\frac12(u_t^2+
u_x^2)-u\gamma_xz^x\,,
$$
where $\gamma_x\neq 0$ is a constant. The Lagrangian function $L$ is regular in the sense of \cite{de_leon_multicontact_2022,Gaset2021}. This Lagrangian has not closed dependence action. The corresponding Hergltoz field equations are
\begin{align}\nonumber
    \gamma_xz^x+u_{xx}+u_{tt}+u\gamma_x u_x&=0\,,
    \\\nonumber
    z^t_t+z^x_x&=L\,.
\end{align}
A solution of these equations is the section $u(t,x)=t$, $z^x(t,x)=0$ and $z^t(t,x)=\frac{t}{2}$. Nevertheless, for this section, we have

$$ D_t\frac{\partial L}{\partial z^x} =D_x\frac{\partial L}{\partial z^t} \Rightarrow \gamma_x u_t=0\Rightarrow \gamma_x =0\,, $$
which is not satisfied as long as $\gamma_x\neq0$. Therefore, this section is a solution of the non-holonomic variational principle, but it is not a solution of the vakonomic variational principle. Therefore, both principles are not equivalent.




\subsection{The Korteweg--De Vries Lagrangian}
The Korteweg--De Vries (KdV) equation is used to model waves on shallow water \cite{Kor1895}. This equation can be derived as the Euler-Lagrange equation of a second order Lagrangian. We will use the higher-order vakonomic Herglotz variational principle introduced in Definition \ref{dfn:LagMultfieldHO} to derive the equations of motion of a contact analogue of the KdV Lagrangian.

KdV equation involves a scalar field over time and one dimension of space. Therefore, we consider a $2$-dimensional base manifold $M$, with coordinates $(t,x)$. Then, in the second order jet $J^2(\mathbb{R}\otimes\Lambda^2M)$ we consider the coordinates 
$$(t,x,u,u_t, u_x, u_{xx}, u_{xy}, u_{yy}, z^t, z^x,z^t_t,z^t_x,z^x_t,z^x_x,z^t_{tt},z^t_{tx},z^t_{xx},z^x_{tt},z^x_{tx},z^x_{xx} )\,.$$
The standard KdV Lagrangian is
\begin{equation}
    L_0 = \frac{1}{2}u_x u_t + u_x^3 - \frac{1}{2}u_{xx}^2\,.
\end{equation}
The Euler--Lagrange equation one obtains from this Lagrangian is
\begin{equation}
    \partial_t \partial_x u + 6 \partial_x u \partial_x^2 u + \partial_x^4 u = 0\,.
\end{equation}
Let us now consider the KdV Lagrangian with a linear action coupling
\begin{equation}
    L = L_0 - \gamma_\mu z^\mu = \dfrac{1}{2}u_x u_t + u_x^3 - \dfrac{1}{2}u_{xx}^2
    - \gamma_\mu z^\mu\,.,
\end{equation}
which has closed action dependence provided that $\gamma_\mu$ are the components of a closed form.
This is a second order Lagrangian, so we need to use the Herglotz field equations derived in
\cref{thm:Herglotz_principle_fields-higher-order}. The Herglotz field equation reads
\begin{equation}
  \partial_t \partial_x u + \dfrac{1}{2}(\gamma_x \partial_tu + \gamma_t \partial_xu) + 6
  \partial_x u \partial_x^2 u + 3\gamma_x(\partial_x u)^2 + \partial_x^4 u +
  (2\gamma_x + \partial_x \gamma_x)\partial_x^3 u + \gamma_x^2 \partial_x^2 u = 0\,,
\end{equation}
along with the constraint
$$
z^t_t+z^x_x=L\,.
$$
One sees that there are additional terms which are linear in the \( \gamma_\mu \), which also appear in the first-order theory, as well as quadratic terms in \( \gamma_\mu \) and involving their derivatives, which are characteristic of a second-order theory.

\section{Conclusions and outlook}

In this paper we have developed a generalization of the Herglotz variational principle \cite{dLeo2021,Her1930} for first-order and higher-order field theories. In order to do this, we have developed two non-equivalent approaches: the non-holonomic and the vakonomic versions. We have seen that the non-holonomic approach is equivalent to the $k$-contact \cite{Gaset2021,Riv2022} and multicontact \cite{de_leon_multicontact_2022} geometric formulations of dissipative field theories. On the other hand, using the vakonomic principle, some new conditions arise. This fact motivates the introduction of the so-called Lagrangians with closed action dependence, for which both approaches are equivalent.

The differences between the non-holonomic and the vakonomic principles have been exemplified with an academic example which has a solution to its $k$-contact Euler--Lagrange equations that is not a solution to the Herglotz field equations arising from the vakonomic variational principle. This is because the Lagrangian considered has not closed action dependence.

We have also studied a first-order field theory, the damped vibrating string, for which the $k$-contact formalism and the Herglotz variational principle are fully equivalent. The last example consisted in modifying the Korteweg--De Vries Lagrangian by adding a standard dissipative term.

In \cite{Lazo2018}, a variational principle for Lagrangians with closed action dependence is derived using an implicit argument. The extension of this approach to the general case will require a deeper analysis of equation \eqref{eq:zmu_def0}. This might be clarifying to understand the condition of closed action dependence \eqref{eq:closed_action}.

There are still many open problems in the geometrization of action-dependent field theories. In first place, it would be interesting to establish the relations among the different geometric frameworks ($k$-contact, $k$-cocontact and multicontact) and the variational principles presented in this work and previous one. Another relevant problem is the case of field theories described by singular Lagrangians. 

There are some singular Lagrangians which are not compatible with the current geometric structures, not even a weakened version of them \cite{DeLeo2022}. Nevertheless, we can derive their corresponding field equations via variational principles. We expect this work will help in the understanding of the underlying geometric structures of these singular Lagrangians.

\subsection*{Acknowledgments}

The authors acknowledge fruitful discussions and comments from our colleague Miguel-C. Mu\~noz-Lecanda.

J. Gaset and X. Rivas acknowledge partial financial support from the Ministerio de Ciencia, Innovaci\'on y Universidades (Spain), projects PGC2018-098265-B-C33 and D2021-125515NB-21.

M. Lainz acknowledges partial financial support of the Spanish Ministry of Science and Innovation (MCIN/AEI/ 10.13039/501100011033), under grants PID2019-106715GB-C2 and ``Severo Ochoa Programme for Centres of Excellence in R\&D'' (CEX2019-000904-S).

X. Rivas acknowledges partial financial support from Novee Idee 2B-POB II project PSP: 501-D111-20-2004310 funded by the ``Inicjatywa Doskonałości - Uczelnia Badawcza'' (IDUB) program.

\bibliographystyle{abbrv}
\bibliography{Fields}

\addcontentsline{toc}{section}{References}

\end{document}